\documentclass[journal]{IEEEtranTIE}
\usepackage{graphics} % for pdf, bitmapped graphics files
\usepackage{epsfig} % for postscript graphics files
\usepackage{amsmath} % assumes amsmath package installed
\usepackage{amssymb}  % assumes amsmath package installed
\usepackage{amsthm}  % assumes amsmath package installed

\usepackage{color}
\usepackage{url}

\usepackage{algorithmic}
\usepackage{algorithm}
\usepackage{enumerate}
\usepackage{here}
\usepackage{flushend}%最終頁の高さを揃える
\usepackage{mathtools}
\usepackage{multirow} 
\usepackage{makecell}  % プリアンブル
\usepackage{cuted} 
\usepackage{stfloats} %長い数式を1段組で表示するときに使用した．
\usepackage{xcolor}

\usepackage{amsthm} % theorem-->IEEEtran では不要

\usepackage[nocompress]{cite}  % assumes amsmath package installed
\usepackage[]{hyperref}%%% Hyperref %%%

\hypersetup{
 hidelinks,
	colorlinks=false, % false: リンク色は付けない（枠で表示）
	citecolor=blue,
	linkcolor=red,
	urlcolor=orange,
	citebordercolor=green,
	linkbordercolor=red,
	urlbordercolor=cyan,
}

\makeatletter
\def\ps@plain{
  \def\@oddhead{\hfil\thepage}%
  \def\@evenhead{\thepage\hfil}%
  \def\@oddfoot{}%
  \def\@evenfoot{}%
}
\makeatother

\begin{document}
% \pagenumbering{arabic} % ←これが決定打（document直後に置く）
% \bstctlcite{BSTcontrolTIE}

\newtheorem{rem}{Remark}
\newtheorem{theorem}{Theorem}
\newtheorem{lemma}{Lemma}
\newtheorem{corollary}{Corollary}
\newtheorem{proposition}{Proposition}

\title{	Generalized bilinear Koopman realization from input-output data
for multi-step prediction with metaheuristic optimization of lifting function and its application to real-world industrial system
% \\ (Apr. 2021)
}

\author{
	\vskip 1em
	Shuichi Yahagi, \emph{Membership},
	Ansei Yonezawa, \emph{Membership},
    Heisei Yonezawa, \emph{Membership},\\
    Hiroki Seto, \emph{Membership},
	and Itsuro Kajiwara
	\thanks{
		Manuscript received Month xx, 2xxx; revised Month xx, xxxx; accepted Month x, xxxx.
		This work was supported in part by the ISUZU Advanced Engineering Center, Ltd.
		
		Shuichi Yahagi \emph{(Corresponding author)} is with Department of Mechanical Engineering, Tokyo City University, 1-28-1 Tamazutsumi, Setagaya-ku, Tokyo 158-8557, Japan (e-mail: yahagisi@tcu.ac.jp).

        Ansei Yonezawa is with Department of Mechanical Engineering, Kyushu University, 744 Motooka, Nishi-ku, Fukuoka 819-0395, Japan. 
        % (e-mail: ayonezawa@mech.kyushu-u.ac.jp).
        
        Heisei Yonezawa and Itsuro Kajiwara are with Division of Mechanical and Aerospace Engineering, Hokkaido University, N13, W8, Kita-ku, Sapporo, Hokkaido 060-8628, Japan. 
        % (e-mail: yonezawah@eng.hokudai.ac.jp; ikajiwara@eng.hokudai.ac.jp).
        
        Hiroki Seto are with 6th Research Department, ISUZU Advanced Engineering Center Ltd., 8 Tsutidana, Fujisawa-shi, Kanagawa 252-0881, Japan.
        %  (e-mail: hiroki\_seto@isuzu.com).  
	}
}

\maketitle
\thispagestyle{plain} % ←タイトル直後に入れると1ページ目にも番号が出る

\begin{abstract}
This paper introduces an input-output bilinear Koopman realization with an optimization algorithm of lifting functions. For nonlinear systems with inputs, Koopman-based modeling is effective because the Koopman operator enables a high-dimensional linear representation of nonlinear dynamics. However, traditional approaches face significant challenges in industrial applications. Measuring all system states is often impractical due to constraints on sensor installation. Moreover, the predictive performance of a Koopman model strongly depends on the choice of lifting functions, and their design typically requires substantial manual effort. In addition, although a linear time-invariant (LTI) Koopman model is the most commonly used model structure in the Koopman framework, such model exhibit limited predictive accuracy. To address these limitations, we propose an input-output bilinear Koopman modeling in which the design parameters of radial basis function (RBF)-based lifting functions are optimized using a global metaheuristic algorithm to improve long-term prediction performance. Consideration of the long-term prediction performance enhances the reliability of the resulting model. The proposed methodology is validated in simulations and experimental tests, with the airpath control system of a diesel engine as the plant to be modeled. This plant represents a challenging industrial application because it exhibits strong nonlinearities and coupled multi-input multi-output (MIMO) dynamics. These results demonstrate that the proposed input-output bilinear Koopman model significantly outperforms traditional linear Koopman models in predictive accuracy.
\end{abstract}

\begin{IEEEkeywords}
    Bilinear system, Data-driven modeling, Koopman operator, Nonlinear system, Diesel engine, Industrial application
\end{IEEEkeywords}

\markboth{}%
{}

\definecolor{limegreen}{rgb}{0.2, 0.8, 0.2}
\definecolor{forestgreen}{rgb}{0.13, 0.55, 0.13}
\definecolor{greenhtml}{rgb}{0.0, 0.5, 0.0}

\section{Introduction}%INTRODUCTION
\subsection{Motivation}
\IEEEPARstart{S}ystem identification is essential for enabling system prediction, control, and fault detection. Traditional system identification methods, such as Auto-Regressive with eXogenous (ARX) models and Numerical Algorithms for Subspace State Space System Identification (N4SID)~\cite{van_overschee_n4sid_1994}  have proven highly effective for linear systems. However, identifying models for highly nonlinear industrial systems remains challenging, which in turn complicates the control of complex industrial processes. To address this issue, data-driven approaches for nonlinear systems have been proposed, including the Koopman operator framework~\cite{korda_linear_2018,o_williams_kernel-based_2015,proctor_generalizing_2018}, deep learning methods such as recurrent neural networks (NNs) and long short-term memory (LSTM) networks~\cite{huang_lstm-mpc_2023}, nonlinear ARX (NARX) models~\cite{zhang_bayesian_2024}, and sparse identification of nonlinear dynamics (SINDy)~\cite{brunton_sparse_2016,bhattacharya_nonlinear_2022}.

The Koopman operator is formulated as a linear operator in an infinite-dimensional space by mapping the original state space to a higher-dimensional space through lifting functions~\cite{brunton_modern_2022,bevanda_koopman_2021}. Its ability to represent a nonlinear system as a linear model is a distinctive feature not found in other data-driven approaches. Although Koopman operator theory inherently deals with infinite-dimensional spaces, in practical computations, a finite-dimensional approximation of the Koopman operator can be identified from data using dynamic mode decomposition (DMD)~\cite{schmid_dynamic_2010,proctor_dynamic_2016} and extended DMD (EDMD)~\cite{williams_datadriven_2015,williams_extending_2016}. The resulting Koopman model can accurately capture the original nonlinear behavior by leveraging a high-dimensional linear structure. This enables the application of conventional linear control theory~\cite{korda_linear_2018,meng_online_2025,zhao_kalman-koopman_2024}. 
Despite these achievements, several challenges for application to industrial systems remain, including the selection of lifting functions that significantly impact prediction performance, the inherent limitations of linear model predictions, and  the unavailability of measurements for the full system states.

\subsection{Related work}
It has been noted that LTI Koopman models may not adequately capture the control-affine dynamics of nonlinear systems, which makes accurate prediction challenging~\cite{yu_autonomous_2022}. To address this limitation, bilinear Koopman realizations have been investigated. These models offer improved predictive accuracy compared to LTI Koopman models while remaining computationally more efficient than fully nonlinear Koopman approaches~\cite{bruder_advantages_2021}. Theoretical foundations of bilinear Koopman realizations have been explored in~\cite{bruder_advantages_2021,goswami_global_2017}. In addition, methods for computing bilinear lifting functions using deep NNs have been proposed~\cite{wang_deep_2024,zhao_deep_2024,abtahi_deep_2025}. 
The choice of the observation function that maps state variables to observable variables--commonly referred to as the lifting function--has a significant impact on the predictive performance of the Koopman model. According to~\cite{shi_koopman_2023}, previous approaches for designing lifting functions include mechanics-inspired selections, empirical selections such as monomials and polynomials, and RBF functions with randomly assigned centers. Another proposed approach employs highly contributory basis functions derived from SINDy modeling~\cite{zhang_reduced-order_2023,wang_improved_2023}. However, these methods often fail to ensure reproducibility or improve the predictive accuracy of model identification. To address these limitations, recent research has explored neural network-based approaches for lifting function design~\cite{lusch_deep_2018,yang_adaptive_2026}. Nevertheless, NN-based methods present challenges such as hyperparameter tuning, overfitting, model complexity, computational cost, and issues related to vanishing or exploding gradients. Furthermore, the orthogonality of NN-based lifting functions is not guaranteed. To date, meta-heuristic approaches such as particle swarm optimization (PSO) have not been investigated for lifting function design.

Previously, LTI and bilinear Koopman realizations typically assumed that all system states and inputs were measurable. However, in practical industrial applications, it is often infeasible to deploy sensors capable of capture every state variable. 
To address this limitation, a Koopman realization framework that relies solely on input-output data is required. This perspective is particularly important for industrial implementation. Moreover, prior research on Koopman operators using input-output data~\cite{korda_linear_2018,xiawen_parameters_2021,wang_improved_2023} has been limited, with most studies focusing exclusively on LTI Koopman formulations. Additionally, the selection of the arguments (i.e., embedded states) of the lifting function for controlled systems has not been thoroughly investigated.

Although modeling of diesel engine airpath systems has been widely investigated~\cite{moriyasu_diesel_2019,zhao_explicit_2014,huang_nitrogen_2023}, achieving accurate and robust models remains highly challenging.
These systems exhibit nonlinear and multivariable dynamics. 
The traditional approaches rely on physical modeling~\cite{zhao_explicit_2014}, often represented as linear parameter-varying (LPV) systems~\cite{wei_gain_2007}. However, physical modelings suffer from limited accuracy and difficulties in parameter identification. In data-driven approaches, LTI modeling has been performed using ARX identification~\cite{iwadare_multi-variable_2009}; however, the applicability of LTI models is restricted due to the wide operating range of engines. For nonlinear system identification, studies such as~\cite{gagliardi_direct_2014} have adopted NARX-based methods, while the literature~\cite{moriyasu_diesel_2019} employed a three-layer NN model. Although these nonlinear modeling approaches achieve high predictive performance, they pose challenges related to computational cost and training complexity. 

\subsection{Contribution and novelty}
This paper presents a bilinear Koopman realization technique from input-output data, incorporating the optimization of the lifting function to enhance its applicability to industrial systems. In the proposed optimization algorithm, RBFs are employed as lifting functions, and their design parameters are optimized using a meta-heuristic approach, specifically PSO. The contributions and novelties of this paper are summarized as follows:

	\emph{Bilinear Koopman realization from input-output data:} For MIMO nonlinear systems, the prediction performance of an LTI Koopman realization is limited; therefore, a bilinear formulation is adopted for Koopman identification. In industrial applications, it is often the case that not all state variables can be measured and only input-output data are available, which is also assumed in this study. By incorporating time-delay coordinates into the measured input-output data and defining the input and output delays as embedded states, an input-output bilinear Koopman realization is constructed. For autonomous systems, it is natural that the embedded states are simply lifted since the states include only the output delays. For controlled systems, the selection of the embedded states lifted is important since the states include input delays in addition to output delays. Prior research (e.g.,~\cite{korda_linear_2018,xiawen_parameters_2021}) employs just embedded states with outputs and inputs delay as arguments of lifting functions. However, the evolution of the inputs is not of interest from the perspective of system identification and feedback control. This is because the inputs are treated as exogenous signals supplied as random or feedback control inputs, rather than as variables evolving autonomously according to a dynamical flow. In this paper, we examine a proper selection of lifting arguments for input-output Koopman realizations. The proposed generalized bilinear Koopman realization solely from input-output data has not been considered in prior works.

	\emph{Optimization of lifting function:} The design of lifting functions significantly influences the prediction performance of Koopman realization and typically requires substantial manual effort~\cite{yahagi_sparse_2025}.  In this study, RBFs are adopted as the lifting functions, and they are optimized using a metaheuristic approach based on multi-step prediction evaluation. This evaluation ensures high predictive accuracy of the Koopman model. The proposed approach has not been explored in prior research and offers several advantages: it does not require prior physical insight, leverages PSO as a well-established global optimization solver, and avoids the inherent challenges of neural network training. Compared to neural network training, PSO offers a robust global optimization framework with easier hyperparameter tuning, clearer objective function definition, and simpler implementation.
     
    \emph{Application to the real-world industrial system:} The proposed method is validated through its application to a diesel engine airpath system, which exhibits MIMO characteristics, strong interaction effects, and nonlinear behavior. The method provides a global bilinear formulation that achieves a balance between predictive accuracy and computational efficiency. This study represents the first application of Koopman-based modeling to a diesel engine airpath system, to the best of the authors' knowledge. Accordingly, the proposed IO-Koopman modeling approach is verified through this real-world industrial application.

\subsection{Article organization}
This paper is organized as follows. Section~\ref{sec:Data-driven modeling via Koopman operator} introduces the basic concept of the Koopman operator for input-dependent nonlinear systems, along with methods for representing LTI and bilinear systems using Koopman matrices with finite-dimensional approximations, and techniques for identifying these matrices. Section~\ref{sec:Proposed methodology} proposes a hyperparameter identification method for RBFs, which are employed as lifting functions, in addition to input-output bilinear Koopman modeling. Sections \ref{sec:Simulation} and \ref{sec:Experimental verification} evaluate the effectiveness of the proposed method through simulation and experimental tests, respectively. The controlled system is the intake and exhaust system of an internal combustion engine, which exhibits strong nonlinearity and MIMO characteristics common in industrial applications. Finally, Section~\ref{sec:Conclusions} summarizes the paper and discusses future research directions.

\subsection{Notation}
The symbol $\mathbb{R}^n$ denotes the set of real numbers in $n$-dimensional space,  and the symbol $\mathbb{R}^{n \times m}$ denotes the space of $n$-row and $m$-column matrices (or two-dimensional arrays) composed of real numbers.
The symbols $\mathcal{M}$ and $\mathcal{N}$ denote smooth manifolds.  
The symbol $\circ$ denotes function composition, the symbol $\odot$ denotes the elementwise product, and the symbol $\otimes$ 
denotes the Kronecker product.
The symbol $\lVert \cdot \rVert_{F}$ denotes the Frobenius norm defined as  
$\lVert A \rVert_{F} = \sqrt{ \sum_{i,j} a_{ij}^{2} }$, where $a_{ij}$ is the element of matrix $A$.
The symbol $I_m$ denotes the identity matrix of size $m \times m$.
The signal sequence is expressed as $(x_i)_{i=0}^{\infty}
$.
The space for all signal sequences is expressed as $\ell(\mathcal{X}) = \{ (x_i)_{i=0}^{\infty} \mid x_i \in \mathcal{X} \}.$
The map $\mathrm{vec}:\ell(\mathcal{X}) \to \mathbb{R}^{\infty}$ convets a signal sequence to vectorized form.
The symbol $0_{m \times n}$ is the zero matrix of size $m \times n$.
The symbol $\dagger$ denotes the Moore-Penrose pseudoinverse of a matrix.
The notation $\psi_{[1:n]}$ is defined as $\psi_{[1:n]} = [\psi_1, \ldots, \psi_n]^\top$,
where $\psi_{[1:n]}$ is a vector-valued function composed of functions $\psi_i$ 
for $i \in \{1,2,\ldots,n\}$.

\section{Data-driven modeling via Koopman operator}
\label{sec:Data-driven modeling via Koopman operator}

\subsection{Koopman operator for non-autonomous system}
\label{subsec:Koopman operator for non-autonomous system}
We describe the Koopman operator theory for non-autonomous systems~ 
\cite{korda_linear_2018,o_williams_kernel-based_2015,proctor_generalizing_2018}.  
Consider a non-autonomous nonlinear system that includes control inputs:
\begin{equation}
    x_{k+1} = f(x_k, u_k)
    \label{eq:nonauto_sys}
\end{equation}
where $f : \mathcal{M} \times \mathcal{N} \to \mathcal{M}$ is the mapping, 
$x_k \in \mathcal{M} \subset \mathbb{R}^m$,
$u_k \in \mathcal{N} \subset \mathbb{R}^m$ denotes the control input at time step $k$,  and $\mathcal{N}$ is a smooth manifold.
For the system, an extended state is defined as
\begin{equation}
    \chi_k = 
    \begin{bmatrix}
        x_k \\
        \nu_k
    \end{bmatrix}
    \label{eq:extended_state}
\end{equation}
where ${\nu_k = \mathrm{vec}\bigl( (u_i)_{i=k}^{\infty} \bigr)}$ denotes the state variable constructed 
from input sequences $(u_i)_{i=k}^{\infty} \in \ell(\mathcal{U})$.
The dynamics of the system with the extended state are given as
\begin{equation}
    \chi_{k+1} = F(\chi_k)
    = 
    \begin{bmatrix}
        f(x_k, \nu_k(0)) \\
        \mathcal{S}  \nu_k
    \end{bmatrix}
    \label{eq:extended_dynamics}
\end{equation}
where $\nu$ is the shift operator defined as ${\mathcal{S} \nu_k = \nu_{k+1}}$  
and ${\nu_k(0) = u_k}$ denotes the first element of $nu_k$ at time $k$.
Therefore, the Koopman operator $\mathcal{K} : \mathcal{F} \to \mathcal{F}$ 
for nonlinear systems with inputs is defined as:
\begin{equation}
    (\mathcal{K}\psi)(\chi_k) = \psi(F(\chi_k)).
    \label{eq:koopman_input}
\end{equation}
The Koopman operator for controlled systems enables a nonlinear system to be represented 
in a (generally infinite-dimensional) linear form by lifting its state space to 
an infinite-dimensional space using the observable.  
The linearity of the Koopman operator helps the use of various linear control theories 
\cite{korda_linear_2018}.

\begin{rem}
The manifolds $\mathcal{M}$ and $\mathcal{N}$ can be considered as 
$\mathcal{M} \subset \mathbb{R}^n$ and $\mathcal{N} \subset \mathbb{R}^m$ 
for most engineering problems \cite{proctor_generalizing_2018,wang_deep_2024}.  
Actually, the states and inputs of the diesel engine airpath system are included in 
$\mathbb{R}^n$ and $\mathbb{R}^m$, as shown in the simulation and experimental test sections.
\end{rem}

\subsection{Finite-dimensional Koopman operator}
\label{subsec:Finite-dimensional Koopman operator}

In Koopman operator theory, the operator acts on an infinite-dimensional space; therefore, 
for numerical computations, it is necessary to introduce a finite-dimensional approximation 
of the Koopman operator.  
In this section, we present the finite-dimensional approximation of the Koopman operator.
The time evolution of a vector-valued lifting function 
$\psi: \mathbb{R}^n \times \mathbb{R}^m \to \mathbb{R}^{p+q}$ 
with $\psi \in \mathcal{F}$ is given as
\begin{equation}
    \psi(x_{k+1}, u_{k+1})
    = K \, \psi(x_k, u_k)
    \label{eq:koopman_finite1}
\end{equation}
where 
$K \in \mathbb{R}^{(p+q) \times (p+q)}$ 
is the finite-dimensional Koopman matrix.
The lifting function $\psi : \mathbb{R}^n \times \mathbb{R}^m \to \mathbb{R}^{p+q}$,  
which corresponds to the dictionary of observables, can be expressed using two lifting 
functions ${\psi_x: \mathbb{R}^n  \to \mathbb{R}^{p}}$ 
and $\psi_u: \mathbb{R}^n \times \mathbb{R}^m \to \mathbb{R}^{q}$ as follows:
\begin{equation}
    \psi(x_k, u_k)
    =
    \begin{bmatrix}
        \psi_x(x_k) \\
        \psi_u(x_k, u_k)
    \end{bmatrix}.
    \label{eq:lifting_split}
\end{equation}
Since the lifting function ${\psi_x}$ depends on the state variables $x$, and $\psi_u$ depends on both the state variables $x$ and the input $u$, the generality is not lost~\cite{son_handling_2020}.
Under this assumption, the time evolution of the lifting function using the 
finite-dimensional Koopman operator (i.e., the Koopman matrix $K$) is given as
\begin{equation}
\begin{aligned}
    \begin{bmatrix}
        \psi_x(x_{k+1}) \\
        \psi_u(x_{k+1}, u_{k+1})
    \end{bmatrix}
    &= 
    K
    \begin{bmatrix}
        \psi_x(x_k) \\
        \psi_u(x_k, u_k)
    \end{bmatrix}
    \\[6pt]
    &=
    \begin{bmatrix}
        K_{11} & K_{12} \\
        K_{21} & K_{22}
    \end{bmatrix}
    \begin{bmatrix}
        \psi_x(x_k) \\
        \psi_u(x_k, u_k)
    \end{bmatrix}
\end{aligned}
\label{eq:koopman_block}
\end{equation}
where 
${K_{11} \in \mathbb{R}^{p \times p}}$, 
${K_{12} \in \mathbb{R}^{p \times q}}$, 
${K_{21} \in \mathbb{R}^{q \times p}}$, 
and 
${K_{22} \in \mathbb{R}^{q \times q}}$.
Since we are interested in the time evolution of the states,  
the above equation can be simplified as follows:
\begin{equation}
    \psi_x(x_{k+1})
    =
    K_{11} \, \psi_x(x_k)
    +
    K_{12} \, \psi_u(x_k, u_k).
    \label{eq:koopman_simplified}
\end{equation}
Previous studies \cite{bruder_advantages_2021,goswami_global_2017} have demonstrated that various system representations
can be achieved by appropriately defining the lifting function $\psi_u(x_k, u_k)$.

\subsubsection{LTI Koopman form}
When representing the system in an LTI Koopman form, setting 
$\psi_u(x_k, u_k) = u_k$ yields the following LTI-Koopman form:
\begin{equation}
    \psi_x(x_{k+1})
    =
    K_{11}\,\psi_x(x_k)
    +
    K_{12}\,u_k.
    \label{eq:LTI_Koopman}
\end{equation}
Here, let $A = K_{11}$,  $B = K_{12}$,  
and $z_k = \psi_x(\cdot)$.  
Then, the LTI state-space equation for lifted states $z_k$ is given as
\begin{equation}
    z_{k+1} = A z_k + B u_k.
    \label{eq:LTI_state}
\end{equation}

\subsubsection{Bilinear Koopman form}
\label{subsec:Bilinear Koopman form}
According to the literature \cite{bruder_advantages_2021,abtahi_deep_2025},  
a bilinear Koopman form can be expressed by defining the lifting function  
${\psi_u : \mathbb{R}^n \times \mathbb{R}^m \to \mathbb{R}^{m+mp}}$ as follows:
\begin{equation}
    \psi_u(x_k, u_k)
    =
    \begin{bmatrix}
        u_k \\
        \psi_x(x_k) \otimes u_k
    \end{bmatrix}
    \label{eq:bilinear_lifting}
\end{equation}
where $\psi_x(x_k)\otimes u_k$ is calculated as
\begin{equation}
\begin{split}
\psi_x(x_k) \otimes u_k
=
\bigl[
    &u_k(1)\psi_x(x_k),\;
    u_k(2)\psi_x(x_k),\;
    \ldots\\
    &\ldots
    ,u_k(m)\psi_x(x_k)
\bigr]^\top
\in \mathbb{R}^{mp}.
\end{split}
\end{equation}
Then, the time evolution of the lifting functions is expressed as
\begin{equation}
\begin{bmatrix}
\psi_x(x_{k+1}) \\
u_{k+1} \\
\psi_x(x_{k+1}) \otimes u_{k+1}
\end{bmatrix}
=
\begin{bmatrix}
A & B_0 & B \\
\ast & \ast & \ast \\
\ast & \ast & \ast
\end{bmatrix}
\begin{bmatrix}
\psi_x(x_k) \\
u_k \\
\psi_x(x_k) \otimes u_k
\end{bmatrix}
\label{eq:bilinear_matrix_form}
\end{equation}
where $A \in \mathbb{R}^{p \times p}$, $B_0 \in \mathbb{R}^{p \times m}$, 
$B = \bigl[B_1,\  \cdots ,\ B_m\bigr] \in \mathbb{R}^{p \times mp}$, and 
$B_i \in \mathbb{R}^{p \times p}$ $(i=1,\ldots,m)$. 
Since we are interested in the time evolution of the states and not in the time evolution of the inputs, 
the elements of the Koopman matrix related to the input dynamics can be ignored. 
Therefore, the above equation (\ref{eq:bilinear_matrix_form}) is reduced to
\begin{equation}
\psi_x(x_{k+1}) = A \, \psi_x(x_k) + B_0 \, u_k + B \bigl(\psi_x(x_k) \otimes u_k\bigr).
\end{equation}
Let $z_k = \psi_x(\cdot)$ denote the lifted state at time $k$.  
Then, the bilinear state equation can be written in the following form:

\begin{equation}
\begin{aligned}
    z_{k+1}
    &=
    Az_k
    +
    B_0u_k
    +
    B(z_k \otimes u_k)
    .
\end{aligned}
    \label{eq:bilinear_equilibrium}
\end{equation}

\subsection{Learning of Koopman matrix}
\label{subsec:Learning_of_Koopman_matrix}
For a nonlinear system with inputs, we assume to measure time-series data of the measurable 
state variables $x_k$, and the inputs $u_k$ are collected, i.e.,
${\mathcal{D} = \{ (x_k, u_k) \mid k = 1, \ldots, N_d \}}$.
From the dataset, the snapshot data are set as:
\(Z^{+}\) and \(Z_{w}\) are given as
\begin{equation}
  \begin{aligned}   
  Z^{+} &= \bigl[\, z_2,\; z_3,\; \ldots,\; z_{N_d} \,\bigr]\\
  Z_{w} &=
  \begin{bmatrix}
    z_1      & z_2      & \cdots & z_{N_d-1} \\
    w_1      & w_2      & \cdots & w_{N_d-1} \\
    z_1\!\otimes\! w_1 & z_2\!\otimes\! w_2 & \cdots & z_{N_d-1}\!\otimes\! w_{N_d-1}
  \end{bmatrix}.
  \end{aligned}
\end{equation}
Then, the state-space matrices $A$, $B_0$, $B$  
for the bilinear Koopman realization are obtained from the following optimization problem:
\begin{equation}
  \min_{H}\;\bigl\| Z^{+} - H\, Z_{w} \bigr\|_{F}
  \label{eq:opt_problem_for_bilinear_realization}
\end{equation}
where \(H=[\,A,\;B_0,\;B\,] \in \mathbb{R}^{p\times p(m+l+1)}\).
This least-squares problem can be
solved 
using the Moore-Penrose pseudoinverse as follows:
\begin{equation}
    H = Z^+ [Z_w, U]^{\dagger}.
    \label{eq:LS_solution}
\end{equation}
In the case of a linear system, the elements of the bilinear term are removed from the snapshot data.

\section{Proposed methodology}
\label{sec:Proposed methodology}

\subsection{Time-delay coordinates}

We explain input-output Koopman modeling \cite{korda_linear_2018}, which addresses the case where only 
input-output data can be measured while the system states are unknown or unobservable.  
Since the diesel engine airpath system which has exogenous inputs is considered in this paper, 
this section handles the nonlinear system with control and exogenous inputs, given as
\begin{equation}
    \begin{aligned}
        x_{k+1} &= f(x_k, u_k, d_k) \\
        y_k &= h(x_k, u_k, d_k)
    \end{aligned}
    \label{eq:nonlinear_io}
\end{equation}
where \( d_k \in \mathbb{R}^{l} \) is the measurable exogenous input,  
\( y_k \in \mathbb{R}^{n_h} \) is the measurable output, and  
\( f : \mathbb{R}^n \times \mathbb{R}^m \times \mathbb{R}^{l} \to \mathbb{R}^n \),  
\( h : \mathbb{R}^n \times \mathbb{R}^m \times \mathbb{R}^{l} \to \mathbb{R}^{n_h} \).
For this system with inputs, we assume to obtain the dataset:
$\mathcal{D} = \{ y_k, w_k \mid k = 1, \ldots, N_d \}$, 
where \( w_k \) is defined as  
\( w_k = [ u_k \;\; d_k ]^\top \).
The time-delay coordinates are introduced to address the constraint that only input-output 
data can be measured.  
The \emph{embedded} state at time \( k \), considering delay-step \( n_d \), is defined as:
\begin{equation}
    \zeta_k = 
    \begin{bmatrix}
        (\zeta_k^y)^\top & (\zeta_k^w)^\top
    \end{bmatrix}^\top
    \in \mathbb{R}^{n_\zeta}
    \label{eq:embedded_state}
\end{equation}
with
\begin{gather}
    \zeta_k^y =
    \begin{bmatrix}
        y_k^\top & y_{k-1}^\top & \cdots & y_{k-n_d}^\top
    \end{bmatrix}^\top
    \in \mathbb{R}^{(n_d+1)n_h}
        \label{eq:zeta_y}
    \\
    \zeta_k^w =
    \begin{bmatrix}
        w_{k-1}^\top & w_{k-2}^\top & \cdots & w_{k-n_d}^\top
    \end{bmatrix}^\top
    \in \mathbb{R}^{n_d(m+l)}
    \label{eq:zeta_w}
\end{gather}
where 
\(
    n_\zeta = n_d(n_h+m+l)n_h
\).

\subsection{IO generalized bilinear Koopman form}

As well as Section~\ref{subsec:Finite-dimensional Koopman operator}, the Koopman model is given as
\begin{equation}
    \psi_x(\zeta_{k+1}) = 
    K_{11} \psi_x(\zeta_k)
    +
    K_{12} \psi_w(\zeta_k)
    \label{eq:io_koopman}
\end{equation}
where the lifting function 
\( \psi_w : \mathbb{R}^{n_\zeta} \times \mathbb{R}^{n_d(m+l)} \to \mathbb{R}^{(m+l)+(m+l)q} \) 
is defined as:
\begin{equation}
    \psi_w(\zeta_k, w_k) =
    \begin{bmatrix}
        w_k \\
        \phi_w(\zeta_k) \otimes w_k
    \end{bmatrix}
    \label{eq:psi_w}
\end{equation}
where \( \phi_w : \mathbb{R}^{n_\zeta} \to \mathbb{R}^{n_w} \) is the function.
Then, analogous to (\ref{eq:bilinear_matrix_form}), the standard form is formulated as
\begin{equation}
\begin{bmatrix}
    \psi_x(\zeta_{k+1})\\
    w_{k+1} \\
    \phi_w(\zeta_{k+1}) \otimes w_{k+1}
\end{bmatrix}
=
\begin{bmatrix}
    A & B_0 & B \\
    \ast & \ast & \ast \\
    \ast & \ast & \ast
\end{bmatrix}
\begin{bmatrix}
    \psi_x(\zeta_k) \\
    w_k \\
    \phi_w(\zeta_k) \otimes w_k
\end{bmatrix}
\end{equation}
Since the time evolution of interest corresponds to the embedded states,  
the above equation is simplified as:
\begin{equation}
    \psi_x(\zeta_{k+1})
    =
    A \psi_x(\zeta_k)
    +
    B_0 w_k
    +
    B \left( \phi_w(\zeta_k) \otimes w_k \right).
    \label{eq:io_general_bilinear}
\end{equation}
By defining 
% \(
%     z_k = \psi_x(\zeta_k)
% \)
% and 
\(
    z_{w,k} = \phi_w(\cdot),
\)
the generalized bilinear forms can be rewritten by
\begin{equation}
    z_{k+1}
    =
    A z_k
    +
    B_0 w_k
    +
    B (z_{w,k}\otimes w_k).
    \label{eq:bilinear_rewritten}
\end{equation}

\begin{rem}
The IO generalized bilinear Koopman form \eqref{eq:io_general_bilinear}  
can be viewed as a bilinear-like LPV Koopman realization  
by considering  
\(
    \phi_w(\zeta_k)
\)
as a scheduling function.  
\end{rem}

\subsection{Argument selection of lifting function for IO-Koopman realization}
The lifting function in previous studies on the standard IO-Koopman form is defined as

\begin{equation}
% z_k = 
\psi_x(\zeta_k)
=
\begin{bmatrix}
\zeta_k \\
\psi_{[1:n_l]}(\zeta_k) \\
\end{bmatrix}
\in \mathbb{R}^{n_\zeta + n_l}
\label{eq:standard_IO_lifting_function}
\end{equation}
where ${\psi_{[1:n_l]} = [\psi_1, \ldots, \psi_{n_l}]^\top}$, and $\psi_i$ is a scalar function for ${\ i \in \{1, \ldots, n_l\}}$.
In previous IO-Koopman realization \cite{korda_linear_2018}, the lifting function is typically chosen as $\psi_i(\zeta_k)$.
Under this lifting function setting, if $\psi(\zeta_k) = [y_k,\ w_{k-1},\ w_{k-1} {\odot} w_{k-1}]^\top$ is given,
the time evolution of $\psi(\zeta_k)$ includes inputs.
However, we are not interested in predicting observables that include inputs,
since inputs are externally provided---either randomly or through a control law.
Thus, employing $\psi(\zeta_k)$ may result in the failure of capturing the essential
system dynamic characteristics due to spurious dynamics of exogenous input,
which deteriorate the prediction performance or require a significantly larger
amount of training data.
Although infinite input sequences are assumed in previous research,
it is not feasible to cover all possible input sequences due to the finite amount of
data in practical situations.
Therefore, unlike prior studies~\cite{korda_linear_2018,xiawen_parameters_2021}, we propose explicitly excluding $w_k$ from nonlinear lifting to prevent the identification of spurious dynamics induced by exogenous inputs. Specifically, we propose the following observable selection:
\begin{equation}
% z_k = 
\psi_x(\zeta_k) 
=
\begin{bmatrix}
\zeta_k \\
\psi_{[1:n_l]}(\zeta_k^y) \\
\end{bmatrix}
\in \mathbb{R}^{n_\zeta + n_l}.
\label{eq:}
\end{equation}
%
% \vspace*{\fill} 
\begin{figure*}[!b]%bht
% \begin{strip}
% \centering
\normalsize 
\hrulefill
\begin{equation}
\begin{aligned}
\begin{bmatrix}
y_{k+1}\\
y_{k}\\
\vdots\\
y_{k-n_d+1}\\
w_{k}\\
w_{k-1}\\
\vdots\\
w_{k-n_d+1}\\
\psi_{[1:n_l]}(\zeta^{y}_{k+1})
\end{bmatrix}
=
\begin{bmatrix}
A_{yy1} & A_{yy2} & \cdots & A_{yy n_d} &
A_{yw1} & A_{yw2} & \cdots & A_{y w n_d} & A_{y\psi}
\\[4pt]
I & 0 & 0 & 0 & 0 & 0 & 0 & 0 & 0
\\
0 & \ddots & 0 & \ddots & 0 & 0 & 0 & 0 & 0
\\
0 & 0 & I & 0 & 0 & 0 & 0 & 0 & 0
\\
0 & 0 & 0 & 0 & 0 & 0 & 0 & 0 & 0
\\
0 & 0 & 0 & 0 & I & 0 & 0 & 0 & 0
\\
\vdots & \ddots & \ddots & \vdots & 0 & \ddots & 0 & 0 & 0
\\
0 & 0 & 0 & 0 & 0 & 0 & I & 0 & 0
\\[4pt]
A_{\psi y1} & A_{\psi y2} & \cdots & A_{\psi y n_d} &
A_{\psi w1} & A_{\psi w2} & \cdots & A_{\psi w n_d} & A_{\psi\psi}
\end{bmatrix}
\begin{bmatrix}
y_{k}\\
y_{k-1}\\
\vdots\\
y_{k-n_d}\\
w_{k-1}\\
w_{k-2}\\
\vdots\\
w_{k-n_d}\\
\psi_{[1:n_l]}(\zeta^{y}_{k})
\end{bmatrix}
& \,+\,
\begin{bmatrix}
0\\
0\\
\vdots\\
0\\
I\\
0\\
\vdots\\
0\\
0\\
\end{bmatrix} w_k
\\
& \kern-2em  +\, B\bigl(\phi_w(\zeta_k)\otimes w_k\bigr)
\end{aligned}
\label{eq:transision_time-delay}
\end{equation}
% \end{strip}
\end{figure*}%
Then, the IO bilinear Koopman model is expressed as (\ref{eq:transision_time-delay}) shown at the
bottom of the page.
This formulation indicates that the embedded state with input delay is generated by a given input.
In other words, the input after transitions, $w_k$, is included in the time-evolved state 
\(z_{k+1} (= \psi_x(\zeta^y_{k+1}))\)
and the input delay is generated through this input transition.
Therefore, the matrices $A$ and $B$ associated with the input-output delay transition are trivially determined by the shift relationship.
If there is an observable associated with the input (e.g., $y_k {\odot} w_{k-1}{\odot} w_{k-1}$),  
the observable is not generated from the given input $w_k$.
Thus, the matrix $A_{\psi \star \star}$ in the Koopman matrix $A$ associated with the time evolution of the lifted state
is not uniquely determined.
Therefore, in this study, the argument of $\psi_i$ is $\zeta_k^y$ instead of $\zeta_k$.
On the other hand, because $\phi_w$ does not contribute to the time evolution of $\psi_x(\zeta_k)$,
we employ a lifting function related to $\zeta_k$ to increase the flexibility of our representation:
\begin{equation}
\phi_w(\zeta_k)
=
\begin{bmatrix}
\phi_{[1:n_w]}(\zeta_k) \\
\end{bmatrix}
\in \mathbb{R}^{n_w}
\end{equation}
where \( {\phi_i:\mathbb{R}^{n_\zeta} \to \mathbb{R}}\) for \({i\in \{1,\cdots,n_w\} }\).
The output equation is also given as \({y_k = C z_k}\)
with
$C =
\begin{bmatrix}
I_{n_h} & 0_{n_h \times (n_l - n_h)}
\end{bmatrix}
$.
We summarilize the above discussion as follows:
%%%%%%%%%%%%%%%%%%%%%%%%%%%%%%%%%
\begin{theorem}[]
\label{theorem1}
Consider the IO generalized bilinear Koopman realization (\ref{eq:io_general_bilinear})
with the state-transition structure 
in (\ref{eq:transision_time-delay}), where $z_k(=\psi_x(\cdot))$ represents “state’’ variables updated by a time-invariant
linear map $A$, and the new input $w_k$ is injected via $B_0$ and $B$. 
Suppose that $\psi_x$ contains an input-dependent nonlinear component $\psi_i(\zeta_k)$, such as $w_{k-1} {\odot} w_{k-1}$ or $y_k\odot w_{k-1}$. Then the next-time quantity $g_{k+1}(:= \psi_i(\zeta_{k+1}))$ corresponding to $\psi_i$ generally depends on the new input $w_k$, and it is structurally impossible to generate $g_{k+1}$ by the block $A_{\psi\star\star}$, 
and including such nonlinearities in $\psi_x$ is causally inconsistent with (\ref{eq:transision_time-delay}).
\end{theorem}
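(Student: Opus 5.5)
The argument is structural and causal. I would compare what the lifted coordinate $g_{k+1}=\psi_i(\zeta_{k+1})$ actually depends on with what the right-hand side of (\ref{eq:transision_time-delay}) can supply to the corresponding row. First I write out $\zeta_{k+1}$ using (\ref{eq:embedded_state})--(\ref{eq:zeta_w}):
\[
\zeta_{k+1}^y=[y_{k+1}^\top,\dots,y_{k-n_d+1}^\top]^\top,\qquad
\zeta_{k+1}^w=[w_k^\top,w_{k-1}^\top,\dots,w_{k-n_d+1}^\top]^\top .
\]
So $\zeta_{k+1}$ contains the newly applied input $w_k$ as an explicit coordinate. This input is not a function of $\zeta_k$: it is exogenous, given randomly or by a control law, and can be chosen independently of the past.

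Next I take the two examples from the statement. For $\psi_i(\zeta)=w_{k-1}\odot w_{k-1}$ evaluated at $\zeta_{k+1}$, we get $g_{k+1}=w_k\odot w_k$, which is quadratic in $w_k$. For $\psi_i=y_k\odot w_{k-1}$ we get $g_{k+1}=y_{k+1}\odot w_k$. Then I examine the row of (\ref{eq:transision_time-delay}) that must reproduce $g_{k+1}$. It has the form
\[
A_{\psi y\star}\zeta_k^y+A_{\psi w\star}\zeta_k^w+A_{\psi\psi}\psi_{[1:n_l]}(\cdot)+(B_0)_{\psi}w_k+B_\psi(\phi_w(\zeta_k)\otimes w_k).
\]
The $A_{\psi\star\star}$ block acts only on $z_k=\psi_x(\zeta_k)$, which is a function of $(y_{k-n_d},\dots,y_k,w_{k-n_d},\dots,w_{k-1})$ and is independent of $w_k$. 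The input blocks are at most affine in $w_k$ with coefficients depending on $\zeta_k$.

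The key step is a variation argument. Fix $\zeta_k$ and vary $w_k$ freely over an open set, which is admissible because $w_k$ is exogenous. The $A_{\psi\star\star}$ contribution is then constant. The full right-hand side is affine in $w_k$. But $g_{k+1}=w_k\odot w_k$ has a nonzero second derivative in $w_k$. For $y_{k+1}\odot w_k$, $y_{k+1}=h(f(x_k,w_k),w_{k+1})$ generically depends nonlinearly on $w_k$, and it also depends on $w_{k+1}$ through $h$, which is not even available at time $k$. Therefore no constant matrices can make the identity hold for all admissible $w_k$. In particular the block $A_{\psi\star\star}$ alone cannot generate $g_{k+1}$, since it would have to equal a nonconstant function of $w_k$ while being independent of it. This is the claimed structural impossibility.

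Causal inconsistency follows from the same argument. The state row would need information ($w_k$ entering nonlinearly, or future inputs via $y_{k+1}$) that the shift structure of (\ref{eq:transision_time-delay}) routes only through the explicit injection term $[0,\dots,I,\dots,0]^\top w_k$. By contrast, for $\psi_i(\zeta^y_k)$ this injection issue does not arise in the lifted rows.

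The main obstacle is making ``generally depends'' precise. I would state it as a genericity assumption: $\partial^2 g_{k+1}/\partial w_k^2\neq0$, or more generally $g_{k+1}$ is not affine in $w_k$ with coefficients in $\mathrm{span}\{\phi_w(\zeta_k)\}$, on a set of positive measure. I would check it on the two examples, where it holds explicitly.
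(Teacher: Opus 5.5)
Your central step is exactly the paper's proof: fix $\zeta_k$, vary the exogenous $w_k$, and note that $A_{\psi\star\star}z_k$ stays fixed while $g_{k+1}$ changes. So the impossibility claim for the block $A_{\psi\star\star}$ is established in the same way.

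Your affine-versus-nonaffine refinement goes beyond the paper. For fixed $\zeta_k$, the entire $\psi$-row of (\ref{eq:transision_time-delay}) is affine in $w_k$. The second-derivative argument then shows that for $\psi_i=w_{k-1}\odot w_{k-1}$ not even the input channels can reproduce $g_{k+1}=w_k\odot w_k$. This gives concrete content to the ``causally inconsistent'' clause, which the paper leaves informal. One correction: $w_k$ reaches that row through $B(\phi_w(\zeta_k)\otimes w_k)$ as well, not only through the injection term as your last paragraph suggests. Your earlier row formula has this right.

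Be careful with the justification you give for $y_k\odot w_{k-1}$ and with your ``by contrast'' sentence. You invoke the fact that $y_{k+1}=h(f(x_k,w_k),w_{k+1})$ depends nonlinearly on $w_k$ and on the future input $w_{k+1}$. That fact applies equally to the model's own $y_{k+1}$ row and to $\psi_i(\zeta^y_{k+1})$, which is a nonlinear function of $y_{k+1}$. Hence the same variation argument shows that $A_{\psi\star\star}$ alone cannot generate $\psi_i(\zeta^y_{k+1})$ exactly either. Your claim that the issue does not arise for output-only lifts is therefore false as stated.

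What really separates the two cases is how $w_k$ enters. For an input-dependent lift, $w_k$ enters $g_{k+1}$ explicitly through the shift of $\zeta^w$, independently of the plant. For $\psi_i(\zeta^y_k)$, it enters only through the plant response that the identified model is meant to approximate. Drop or rephrase that sentence. It is not needed for the theorem as stated, which your argument proves.
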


\begin{IEEEproof}[Proof]
Assume \(g_{k+1} = a^\top z_k\), 
where \(a^\top\) is the row of \(A_{\psi \star\star}\) corresponding to the observable \(\psi_i\). 
Fix $\zeta_k$ (thus $z_k$) and choose $w_k^{(1)}\neq w_k^{(2)}$. 
Since $g_{k+1}$ depends on $w_k$ (e.g., 
$y_{k+1}\odot w_k$), 
one obtains $g_{k+1}^{(1)}\neq g_{k+1}^{(2)}$, whereas \(a^\top z_k\)
is identical for both choices ($\because$ it contains no $w_k$).
This contradiction proves that $g_{k+1}$ cannot be produced  by $A_{\psi\star\star}$.
\end{IEEEproof}

\begin{rem}
Theorem~\ref{theorem1} indicates that using \(\psi_i(\zeta_k)\) may prevent 
% the essential dynamic characteristics of the system 
the essential system dynamics
from being captured due to spurious dynamics induced by the exogenous input.
\end{rem}

\begin{corollary}
To preserve causal consistency with the state-transition structure in (\ref{eq:transision_time-delay}), input-dependent nonlinearities must be allocated to $\phi_w(\zeta_k)$, and the arguments of $\psi_i$ should be restricted to $\zeta^y_k$. 
\end{corollary}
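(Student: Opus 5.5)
We derive the corollary from Theorem~\ref{theorem1} together with the structure of the injection channels in (\ref{eq:transision_time-delay}). The key observation is that in the IO generalized bilinear realization (\ref{eq:io_general_bilinear}) a new input $w_k$ can enter $z_{k+1}$ only through $B_0 w_k + B(\phi_w(\zeta_k)\otimes w_k)$. This expression is affine in $w_k$, and its coefficients are functions of $\zeta_k$ alone. Everything else in $z_{k+1}$ must come from $A z_k$.

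First, we classify each candidate scalar observable of $\psi_x$ by its arguments. Either it depends only on $\zeta^y_k$, or it depends nontrivially on some component of $\zeta^w_k$, for example $w_{k-1}\odot w_{k-1}$ or $y_k\odot w_{k-1}$. For the second class we invoke Theorem~\ref{theorem1} directly. After one time step, such an observable involves $w_k$ in the slot previously occupied by $w_{k-1}$, so the block $A_{\psi\star\star}$ cannot generate it. We then check that the injection channels do not rescue it either. Fix $\zeta_k$ and take $w_k^{(1)}\neq w_k^{(2)}$. The quantity $g_{k+1}$ is in general nonlinear in $w_k$ (e.g.\ $w_k\odot w_k$), or its $w_k$-coefficient depends on $y_{k+1}$, which is not a function of $\zeta_k$. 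In both cases $g_{k+1}$ is not of the form $a^\top z_k + c(\zeta_k)^\top w_k$. Comparing three distinct values of $w_k$ gives the contradiction, in the same way as the proof of Theorem~\ref{theorem1}. Hence causal consistency forces the arguments of every $\psi_i$ to lie in $\zeta^y_k$.

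Second, we show that this restriction loses no modeling freedom for input-dependent nonlinear effects, because such effects can be placed in $\phi_w$. The function $\phi_w$ appears only on the right-hand side of (\ref{eq:io_general_bilinear}), multiplied by the current input $w_k$. Its own evolution $\phi_w(\zeta_{k+1})$ is never required, since the corresponding rows are the ignored $\ast$ rows. Therefore $\phi_w$ may depend on the full $\zeta_k$, including $\zeta^w_k$, without creating any uniqueness or causality issue. A term such as $y_k\odot w_{k-1}$ acting on $w_k$ is then realized as a scheduling coefficient in $B(\phi_w(\zeta_k)\otimes w_k)$. Conversely, with $\psi_i=\psi_i(\zeta^y_k)$, the next value $\psi_i(\zeta^y_{k+1})$ depends on $w_k$ only through $y_{k+1}$. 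The ansatz (\ref{eq:transision_time-delay}) approximates that dependence exactly through the rows of $B_0$ and $B$ in the $\psi$ block. The shift rows for $\zeta^w$ remain trivially determined, since $w_k$ enters through the $I$ block.

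The main obstacle is making ``generally depends on $w_k$'' in the first step rigorous. The argument needs at least one admissible $\zeta_k$ at which $g_{k+1}$ is not affine in $w_k$ with $\zeta_k$-measurable coefficients. I would handle this by genericity: it holds for the quoted examples, and it fails only on a non-generic set of observables. The sufficiency part is necessarily only at the level of the finite-dimensional Koopman approximation, not an exact identity. I would state it as consistency of the model structure rather than exactness.
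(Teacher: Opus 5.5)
Your core step matches the paper's proof. You invoke Theorem~\ref{theorem1} for observables built from $\zeta^w_k$, and you note that $\phi_w$ may depend on all of $\zeta_k$ because its own update lies in the discarded $\ast$ rows. The paper stops there: $A$ cannot produce the next-time image of an input-dependent observable, so such nonlinearities are moved into $\phi_w$, and their $w_k$-dependence is assigned to $B_0$ and $B$ by construction.

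The gap is in the step you add on top, and your ``hence'' rests on it. Your exclusion test is that $g_{k+1}$ is not of the form $a^\top z_k + c(\zeta_k)^\top w_k$. This test does not single out $\zeta^w$-dependent observables. With $\zeta_k$ fixed, $\psi_i(\zeta^y_{k+1})$ still varies with $w_k$ through $y_{k+1}$. Since $\psi_i$ is nonlinear (e.g.\ the polyharmonic RBF $r\log r$), it is non-affine in $w_k$ whenever $y_{k+1}$ depends on $w_k$ at all. This holds even under the model's own update, in which $y_{k+1}$ is affine in $w_k$. Against the true airpath plant, which is nonlinear in its inputs, even the linear coordinate $y_{k+1}$ fails the test. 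So the same three-point argument shows that the retained observables $\psi_i(\zeta^y_k)$ are just as `inconsistent'.

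You then accept that class because the ansatz approximates its $w_k$-dependence. (In \eqref{eq:transision_time-delay} the $B_0$ column vanishes on the $\psi$ rows, so only $B(\phi_w\otimes w_k)$ is available there.) But the same appeal to approximation would also admit $w_{k-1}\odot w_{k-1}$, since an affine-in-$w_k$ channel can approximate $w_k\odot w_k$ on a bounded input range. Applied uniformly, either standard---exact or approximate---erases the dichotomy the corollary asserts.

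What is missing is a property that genuinely separates the two classes. There are two ways to supply one.

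\emph{Structural reading (the paper's).} Present the corollary as a design principle. Coordinates built from $\zeta^w_k$ are meant to be generated from the injected $w_k$ by the fixed shift rows. This works for the linear delays but cannot work for nonlinear functions of them. Coordinates built from $\zeta^y_k$ are the learned part, and their $w_k$-dependence is delegated to $B$.

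\emph{Quantitative hypothesis.} Make the distinction a matter of approximation order. For a sampled control-affine plant and smooth $\psi_i$, the $w_k$-dependence of $\psi_i(\zeta^y_{k+1})$ is affine up to a term of second order in the sampling period. By contrast, $w_k\odot w_k$ keeps an $O(1)$ non-affine part, because the exogenous $w_k$ is arbitrary relative to $\zeta_k$.

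Your genericity remark does not supply either of these. It concerns whether the obstruction occurs, not whether it is worse for one class than for the other.
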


\begin{IEEEproof}
According to Theorem \ref{theorem1}, placing input nonlinearities inside $\psi_x$ forces their next-time image to depend on $w_k$, which cannot be realized by the state-transient matrix $A$. 
Thus, to preserve the interpretation that $\psi_x$ represents state variables updated solely by $A$, input-dependent nonlinearities should be handled through $\phi_w(\zeta_k)$ so that their $w_k$-dependence is  routed via $B_0$ and $B$, while $\psi_x$ is restricted to $\zeta^y_{k}$. 
\end{IEEEproof}

\subsection{IO generalized Koopman realization}
\label{subsec:io_generalized_koopman_realization}
This section describes IO generalized bilinear Koopman realization. 
The snapshot data for the extended states and inputs are set as
\begin{gather}
      X = \bigl[\, \zeta_1,\; \zeta_2,\; \ldots,\; \zeta_{N_d-n_d} \,\bigr]\\
      W = \bigl[\, w_1,\; w_2,\; \ldots,\; w_{N_d-n_d} \,\bigr]
  % \quad
  \label{eq:41}
\end{gather}
with
\begin{gather}
  \zeta_i
  = 
  \begin{bmatrix} (\zeta_i^y)^\top  & (\zeta_i^w)^\top \end{bmatrix}^\top \in \mathbb{R}^{n_\zeta}\\
  \zeta_i^{\,y}
  =
  \begin{bmatrix}
    y_{i,n_d}^\top &
    y_{i,n_d-1}^\top &
    \cdots &
    y_{i,0}^\top
  \end{bmatrix}^{\!\top}
  \in \mathbb{R}^{(n_d+1)n_l}\\
  \zeta_i^{\,w}
  =
  \begin{bmatrix}
    w_{i,n_d-1}^\top &
    w_{i,n_d-2}^\top &
    \cdots &
    w_{i,0}^\top
  \end{bmatrix}^{\!\top}
  \in \mathbb{R}^{n_d(m+l)}.
\end{gather}
Then, the snapshot data 
\(Z^{+}\) and \(Z_{w}\) 
for IO generalized koopman realization 
are expressed as
\begin{equation}
  \begin{aligned}   
  Z^{+} &= \bigl[\, z_2,\; z_3,\; \ldots,\; z_{N_d} \,\bigr]\\
  Z_{w} &=
  \begin{bmatrix}
    z_1      & z_2      & \cdots & z_{N_d-1} \\
    w_1      & w_2      & \cdots & w_{N_d-1} \\
    z_{w,1}\!\otimes\! w_1 & z_{w,2}\!\otimes\! w_2 & \cdots & z_{w,N_d-1}\!\otimes\! w_{N_d-1}
  \end{bmatrix}
  \end{aligned}
\end{equation}
and, the IO generalized bilinear Koopman model is obtained from the optimization probelm  (\ref{eq:opt_problem_for_bilinear_realization}).

\subsection{Optimization of lifting functions}
\label{subsec:opt_lifting}
This section presents the optimization method of lifting functions. Herein, we optimize the design parameters of RBFs. The lifting function is generally selected, such as mechanics-inspired selection, empirical selection (e.g., monomials, polynomials), and RBF selection with randomly selected design parameters. Such selections may not provide good performance since the lifting function is not optimized. The finding method of the proper lifting function is mainly a NN-based approach. This approach has the same difficulties as general NN learning, including hyperparameter setting, overfitting, model complexity, computational cost, vanishing and exploding gradient problem. In this paper, we adopt a metaheuristic optimization (e.g., PSO) approach, as an alternative approach to the NN-based method. 

The optimization algorithm is shown in Algorithm~\ref{algorithm: Optimization of lifting functions}.
In the proposed optimization method, RBFs are adopted as the lifting functions.
For example, polyharmonic RBFs are given as
\begin{gather}
    \psi(x_1;\theta_1)
    = \lVert x_1 - \theta_1 \rVert_2 \log \lVert x_1 - \theta_1 \rVert_2
    \\
    \phi(x_2;\theta_2)
    = \lVert x_2 - \theta_2 \rVert_2 \log \lVert x_2 - \theta_2 \rVert_2
\end{gather}
where \(x_i\) and \(\theta_i\) (\(i=1,2\)) are the input and center vectors of the RBFs.
The center is considered the design parameter.
Then, the design parameters of the RBFs are optimized by global optimization such as PSO.

As shown in Algorithm~\ref{algorithm: Optimization of lifting functions}, the proposed optimization process of the centers of RBFs
encompasses the realization of the Koopman matrix.
The optimization problem is formulated as
\begin{equation}
    \min_{\theta} J(\theta), \;
    J(\theta) = \sum_{k=1}^{N_d} (y_k - \hat{y}_k(\theta))^2
\end{equation}
where \(\hat{y}_k\) denotes \emph{multi-step (not one-step)} prediction given by
\begin{equation}
\left\{
\begin{aligned}
    \hat{z}_{k+1} &= A \hat{z}_k + B_0 w_k + B(\hat{z}_{w,k} \otimes w_k) \\
    \hat{y}_k &= C \hat{z}_k
\end{aligned}
\right.
\end{equation}
with the initial lifted states 
$ z_0 = \psi(\zeta_0)$ and $z_{w,0} = \phi(\zeta_0)$.
Matrices \(A, B_0, B, C\) can be obtained by Koopman realization
for candidate centers, as shown in Section~\ref{subsec:io_generalized_koopman_realization}.

\begin{algorithm}[t]
\caption{Optimization of lifting functions}
\label{algorithm: Optimization of lifting functions}

\begin{algorithmic}[1]

\STATE \textbf{Inputs:}
\STATE \quad Max iteration $N_{\max}$
\STATE \quad RBFs $\psi(x_1;\theta_1),\;\phi(x_2;\theta_2)$
\STATE \quad Dataset $\mathcal{D}=\{y_k,u_k,d_k\mid k=1,\ldots,N_d\}$

\STATE \textbf{Outputs:}
\STATE \quad Design parameter vector $\theta=[\theta_1,\theta_2]$
\STATE \quad Koopman matrices $A, B_0, B, C$

\STATE Set initial $\theta$, Maximum iterations $N_{\max}$
\STATE $iteration \gets 1$

\WHILE{(iteration $\le N_{\max}$)}
  \STATE Compute Koopman matrices $A, B_0, B, C$ for
  $\psi(x_1;\theta_1), \phi(x_2;\theta_2)$
  via Koopman realization
  \STATE Update $\theta$ by minimizing $J(\theta)$
  \IF{$J(\theta)<\varepsilon$}
      \STATE \textbf{break}
  \ENDIF
  \STATE $iteration \gets iteration + 1$
\ENDWHILE

\STATE Return $\theta^\ast = \arg\min J(\theta)$ and matrices $A,B_0,B,C$

\end{algorithmic}
\end{algorithm}

\subsection{Model validation}

In model validation, the modeling accuracy is evaluated using the coefficient
of determination, $R^{2}$, of $y$, defined as
\begin{equation}
    R^{2}
    =
    1 -
    \frac{
        \sum_{k=1}^{N_d} \left( y_k - \hat{y}(k) \right)^2
    }{
        \sum_{k=1}^{N_d} \left( y(k) - \bar{y} \right)^2
    }
\end{equation}
where $\bar{y}$ is the mean value of the $N_d$ data~\cite{haber_structure_1990}.
An $R^2$ score equal to 1.0 indicates that the identified model best fits the target system. From an engineering perspective, an R-squared score is acceptable when it reaches a value from 0.9 to 1.0~\cite{schaible_fuzzy_1997}. A negative value means that the inferred model has a very low ability to represent an equivalent dynamical system. In nonlinear systems, multi-step prediction may not be achievable even when accurate one-step prediction is achieved~\cite{xiao_deep_2023}
. Therefore, we evaluate both one-step and multi-step predictions. The one-step prediction is defined as
\begin{equation}
\left\{
\begin{aligned}
    \hat{x}(t+1) &= f\bigl(x(t),\, u(t),\, d(t)\bigr)\\
    \hat{y}(t) &= h\bigl(\hat{x}(t)\bigr)
\end{aligned}
\right.
\end{equation}
where $\hat{y}$ is the predicted value of $y$. 
The one-step ahead is predicted from the given input and output data.
The multi-step prediction is defined as
\begin{equation}
\left\{
\begin{aligned}
    \hat{x}(t+1) &= f\bigl(\hat{x}(t),\, u(t),\, d(t)\bigr) \\
    \hat{y}(t) &= h\bigl(\hat{x}(t)\bigr)
\end{aligned}
\right.
\end{equation}
with $\hat{x}(0)=x(0)$. 
The multi-step ahead is predicted from the given input and the predicted past output.

\section{Simulation}
\label{sec:Simulation}

\subsection{Target system}
\label{subsec:Target system}

The intake and exhaust system of a four-cylinder diesel engine~\cite{yahagi_sparse_2025} is shown in Fig.~\ref{fig:Diesel engine airpath system.png}. 
This system includes an exhaust gas recirculation (EGR) system and a variable geometry turbocharger (VGT). 
The EGR system regulates the oxygen concentration entering the cylinders by mixing oxygen-lean exhaust gas with fresh air. 
This reduces the oxygen concentration, lowers the peak combustion temperature, and suppresses the formation of harmful nitrogen oxides, 
which are produced in large quantities at high temperatures.
The VGT system adjusts the pressure inside the intake manifold. By narrowing the spacing of the movable vanes, 
the flow path is restricted and the flow rate increases, achieving a supercharging effect even at low engine speeds. 
At high engine speeds, the vane spacing is widened to facilitate exhaust gas flow.
Fresh air is drawn in from the atmosphere and compressed by the compressor. After cooling in the intercooler, 
it passes through the intake throttle and enters the intake manifold. 
Exhaust gas also flows into the intake manifold after passing through the EGR valve. 
The oxygen concentration in the intake manifold is controlled by operating the EGR valve. 
The mixture then flows into the cylinders, and after combustion, the exhaust gas is discharged into the exhaust manifold. 
The gas leaving the exhaust manifold splits into two paths: one recirculated through the EGR cooler, 
and the other passing through the turbine and exiting as exhaust gas. 
The burned gas drives the turbine, and the turbine speed is controlled by adjusting the vane angle.
In this study, modeling is performed using a data-driven approach; therefore, further explanation of physical modeling is omitted. 
For details, refer to~\cite{kaneko_model_2019}.

\begin{figure}[tb]%thb
\centering
\includegraphics[width=4.5cm]{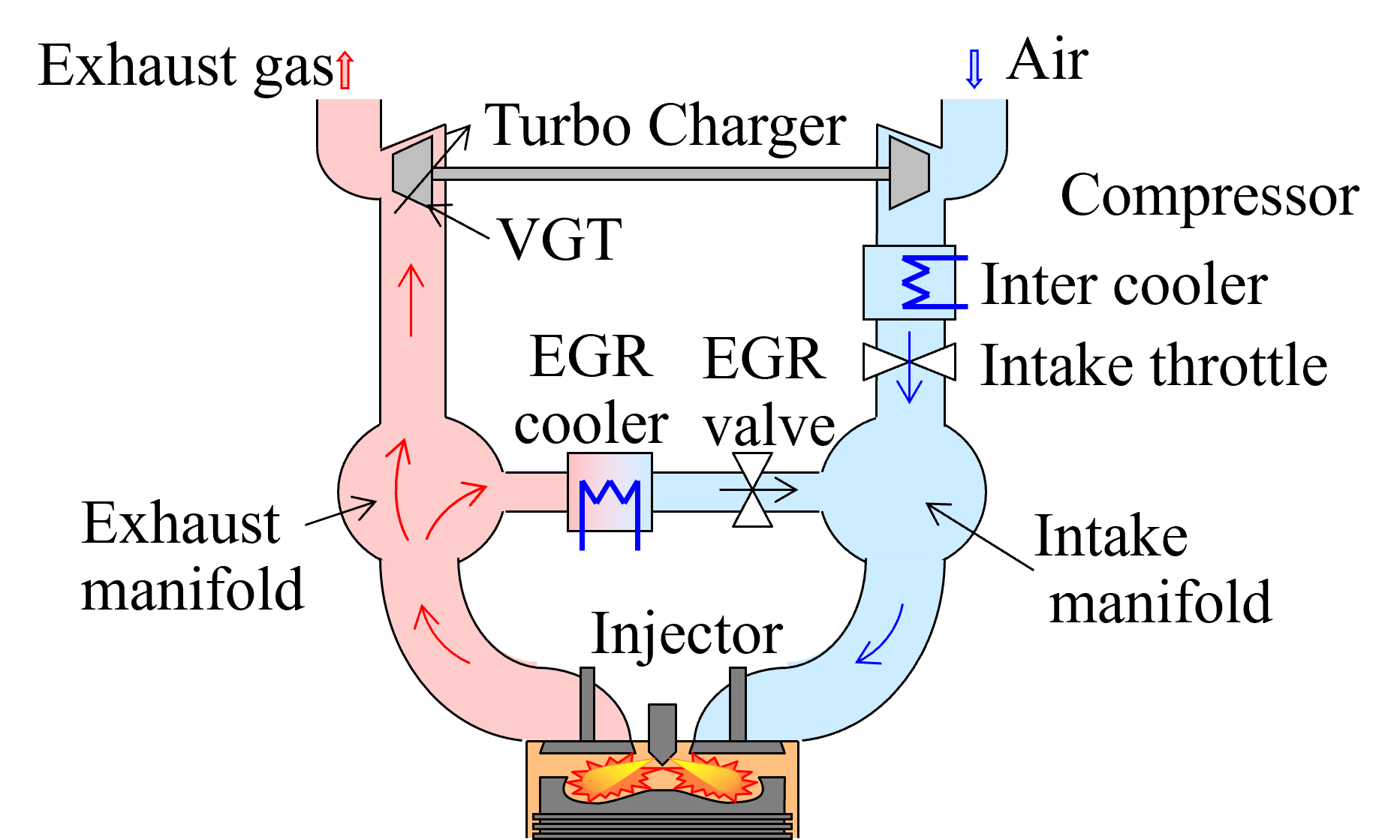}
\vspace{-0.5em} % 上に近づける
\caption{Diesel engine airpath system.
\label{fig:Diesel engine airpath system.png}}
% \end{center}
\end{figure}

\subsection{Simulation setting}
\label{subsec:Simulation setting}

Training and test data were generated using a design of experiments (DoE). Refer to~\cite{yahagi_sparse_2025,yonezawa_sparse_2026}. 
The generated input signals were applied to the plant inputs (i.e., VGT valve position, EGR valve position, fuel injection amount, and engine speed), and the outputs (i.e., intake manifold pressure and EGR rate) were measured. The sampling period was set to 0.1 seconds, resulting in 25,000 data points. The time-series data were normalized as a preprocessing step. To optimize the RBF hyperparameters, PSO, a reliable global optimization solver, was employed. A delay step of 2 was used.
PSO was implemented using the MATLAB's \texttt{particleswarm} function with a maximum of 30 iterations and a swarm size of 300. Parallel computation was also utilized. 
Table~\ref{tab:simulation_cases} shows the koopman forms considered. 
The number of optimization parameters was set to 10 for all cases. 
In the table,
    H-DMD represents Hankel-DMD \cite{zawacki_dynamic_2023}: $z_{k+1} = Az_k + B_w w_k$,
     LK represents the LTI Koopman realization \cite{korda_linear_2018,williams_extending_2016}: $z_{k+1} = A z_k + B_w w_k$,
   BLK represents the standard bilinear Koopman realization\cite{bruder_advantages_2021,yahagi_sparse_2025}: 
    $z_{k+1} = A z_k + B_w w_k + B_{z} (z_k \otimes w_k)$,
    GBK represents the generalized bilinear Koopman realization (the proposed method): 
    $z_{tk+1} = A z_t + B_w w_t + B_z (z_{w,k} \otimes w_k)$.

%%%%% Table 1
\begin{table}[tb]
\centering
\caption{Simulation cases of Koopman realization. 
}
\label{tab:simulation_cases}
\renewcommand{\arraystretch}{1.15} 
\vspace{-1.0em} % 上に近づける
\begin{tabular}{c|c|c|c}
\hline
\textbf{Case} &
\centering \makecell{\textbf{Lifting }\\\textbf{function $\psi_x$}}&
\centering \makecell{\textbf{Lifting }\\\textbf{function $\phi_w$}} &
\textbf{Opt.}
\\ \hline

H-DMD &
\centering $\zeta_k$ &
\centering -- &
-- 
\\ \hline

LK &
\centering $\begin{bmatrix}
\zeta_k \\ \psi_{[1:10]}(\zeta_k)
\end{bmatrix}$ &
\centering -- &
with
\\ \hline

BLK(poly) &
10th-order polynomial of $\zeta_k$ &
Similar to $\psi_x$ &
-- 
\\ \hline

BLK &
\centering $\begin{bmatrix}
\zeta_k \\ \psi_{[1:10]}(\zeta_k)
\end{bmatrix}$ &
Similar to $\psi_x$ &
with
\\ \hline

GBLK &
\centering $\begin{bmatrix} 
\zeta_k \\ \psi_{[1:5]}(\zeta^y_k)
\end{bmatrix}$ &
\centering $\begin{bmatrix}
\zeta_k \\ \phi_{[1:5]}(\zeta_k)
\end{bmatrix}$ &
with
\\ \hline
\end{tabular}
\end{table}

\subsection{Results and discusstions}
\label{subsec:Simulation results}

Fig.~\ref{fig:SimDE_lrn_data} illustrates the training and test data generated by applying input signals and measuring the corresponding outputs. 
The orange and blue lines represent the training and test datasets, respectively. 
In the figure, $y_1$, $y_2$, $u_1$, $u_2$, $d_1$, and $d_2$ denote the intake manifold pressure~[kPa], 
EGR rate~[\%], VGT vane position~[\%], EGR valve position~[\%], fuel injection amount~[mm\textsuperscript{3}/st], and engine speed~[rpm], respectively. 
Fig.~\ref{fig:SimDE_predict_valid_data} presents the time-series data of prediction performance for test data. 
The dotted lines represent the predicted values, and the solid lines represent actual values. 
Table~\ref{tab:sim_R2_longterm} reports the R-squared scores for one-step and long-term predictions for the above cases. We compared our method with conventional approaches, including Hankel DMD, the LTI Koopman model, the bilinear Koopman model with polynomial basis functions, and the bilinear Koopman model with randomly assigned RBF centers. Among these, Hankel DMD without lifting exhibited the poorest predictive performance, and the LTI Koopman model without RBF optimization also performed poorly. Although RBF optimization significantly improved the predictive accuracy of the LTI Koopman model, the improvement was still insufficient, highlighting the need for identification in the bilinear model. Next, we examined the case where the lifting function for the bilinear form was polynomial and found that its predictive performance was substantially poor. For the standard bilinear form, the $R^2$ value exceeded 0.8 after RBF optimization. However, increasing the number of RBFs led to a marked deterioration in predictive performance on the test data, indicating poor generalization. This is likely due to the inclusion of inputs in the observables, as discussed in Section~\ref{subsec:io_generalized_koopman_realization}. Finally, the proposed generalized bilinear form B with RBF optimization achieved the highest prediction accuracy in this study, with an $R^2$ value of over 0.9, meeting the requirements for engineering applications~\cite{schaible_fuzzy_1997}.

%%%% Table2
\begin{table}[tb]
\centering
\caption{R-squared of one-step and long-term predictions in the simulation.}
\label{tab:sim_R2_longterm}

\vspace{-1.0em} % 上に近づける
\begin{tabular}{c|c|c|c|c|c|c}
\hline
\multirow{3}{*}{\textbf{Case}} &
\multicolumn{2}{c|}{\makecell{\textbf{One-step}\\\textbf{prediction}}} &
\multicolumn{4}{c}{\textbf{Long-term prediction}} \\ \cline{2-7}
& \multicolumn{2}{c|}{learning data} &
\multicolumn{2}{c|}{learning data} &
\multicolumn{2}{c}{test data} \\ \cline{2-7}
& $y_1$ & $y_2$ & $y_1$ & $y_2$ & $y_1$ & $y_2$ \\ \hline
H-DMD & 0.994 & 0.991 & $-14.5$ & $-7.85$ & $-15.0$ & $-7.89$ \\ \hline
LK  & 0.992 & 0.989 & 0.793  & 0.910  & 0.778  & 0.908  \\ \hline
BLK(poly)  & 0.996 & 0.993 & $-14.3$ & $-7.77$ & $-14.8$ & $-7.78$ \\ \hline
BLK  & 0.996 & 0.993 & 0.874  & 0.988  & $-0.295$ & 0.347  \\ \hline
GBLK & \textbf{0.995} & \textbf{0.992} & \textbf{0.931}  & \textbf{0.951}  & \textbf{0.922}  & \textbf{0.949}  \\ \hline
\end{tabular}
\end{table}

\begin{figure}[tb]%thb
\centering
\includegraphics[width=8.0cm]{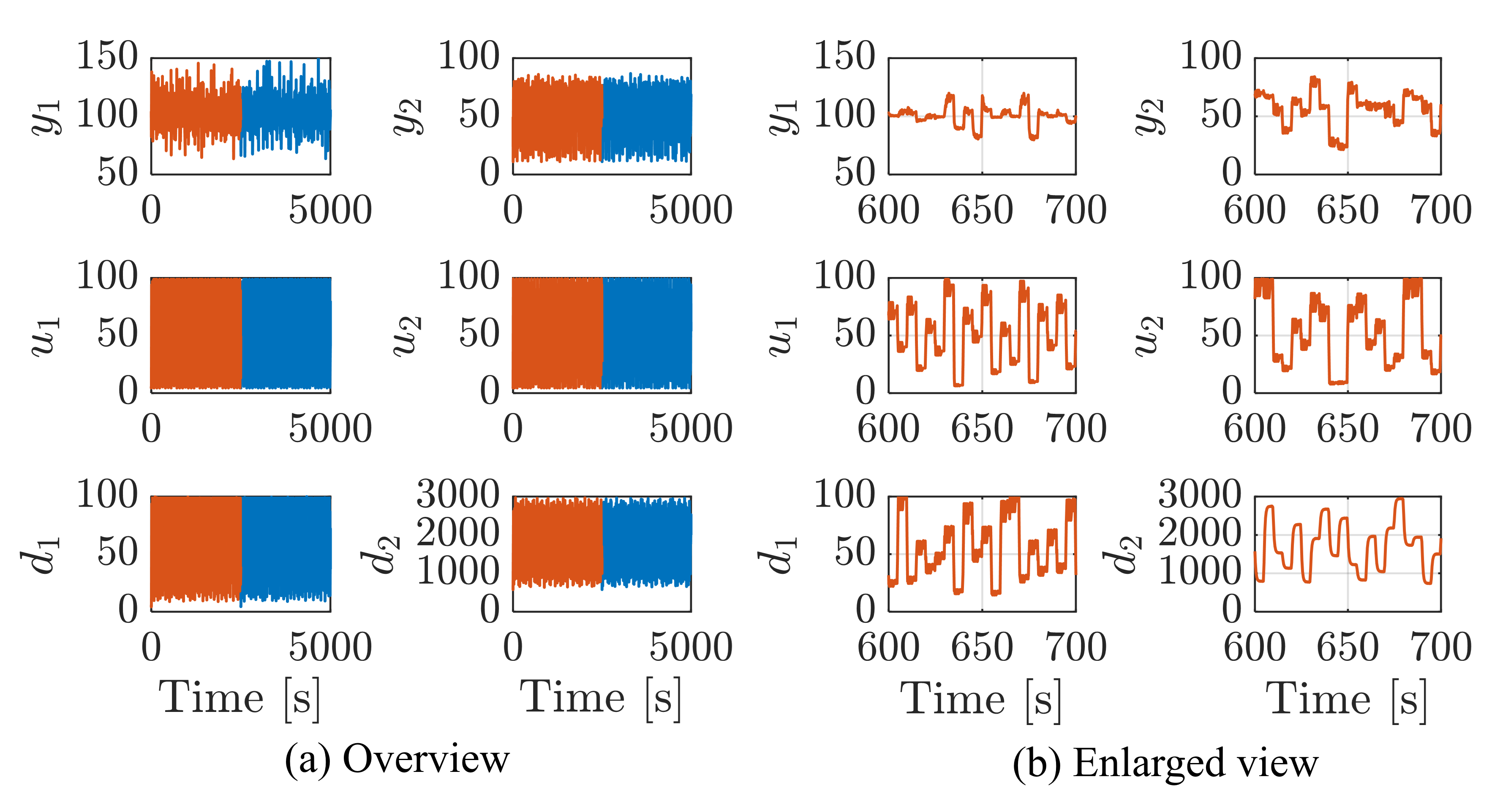}
\vspace{-1em} 
\caption{The learning and test data in simulation. 
\label{fig:SimDE_lrn_data}}
\end{figure}

Simulation results demonstrated that the proposed method significantly outperformed conventional approaches. The predicted output of the linearly identified model exhibited poor long-term predictive performance in ODE simulations. Diesel engine airpath systems exhibit complex nonlinear characteristics, yet linear models and Hankel DMD rely on linear realizations and therefore fail to deliver satisfactory results. We then examined the bilinear Koopman realization. First, for the bilinear Koopman model using polynomial basis functions--similar to previous studies~\cite{zhang_reduced-order_2023,wang_improved_2023}--the lifting functions were selected based on findings from SINDy modeling~\cite{yahagi_sparse_2025}. However, this approach proved ineffective for modeling diesel engine airpath systems. Since designers cannot successfully reuse the basis functions employed in SINDy, alternative approaches are required. Previous literature has used randomly centered RBFs as lifting functions, but as noted in~\cite{shi_deep_2022}, Koopman models with non-optimized RBFs tend to diverge. Our simulations confirmed this observation: while one-step predictions achieved very high R-squared exceeding 0.98, long-term predictions failed to meet performance requirements. Our proposed method addresses this issue by optimizing the centers of the RBFs used as lifting functions based on multi-step prediction performance. The bilinear Koopman model with RBF optimization demonstrated superior predictive accuracy compared to other methods. Furthermore, we confirmed that prediction and generalization performance depend strongly on the choice of arguments for the lifting function. Previous studies typically used embedded state variables consisting of time delays of inputs and outputs, but we found that these methods suffered from poor prediction and generalization. In contrast, our proposed argument selection strategy achieved significantly better performance. These results demonstrate the effectiveness of the proposed method.

\begin{figure}[t]%thb
% \begin{center}
\centering
\includegraphics[width=8.0cm]{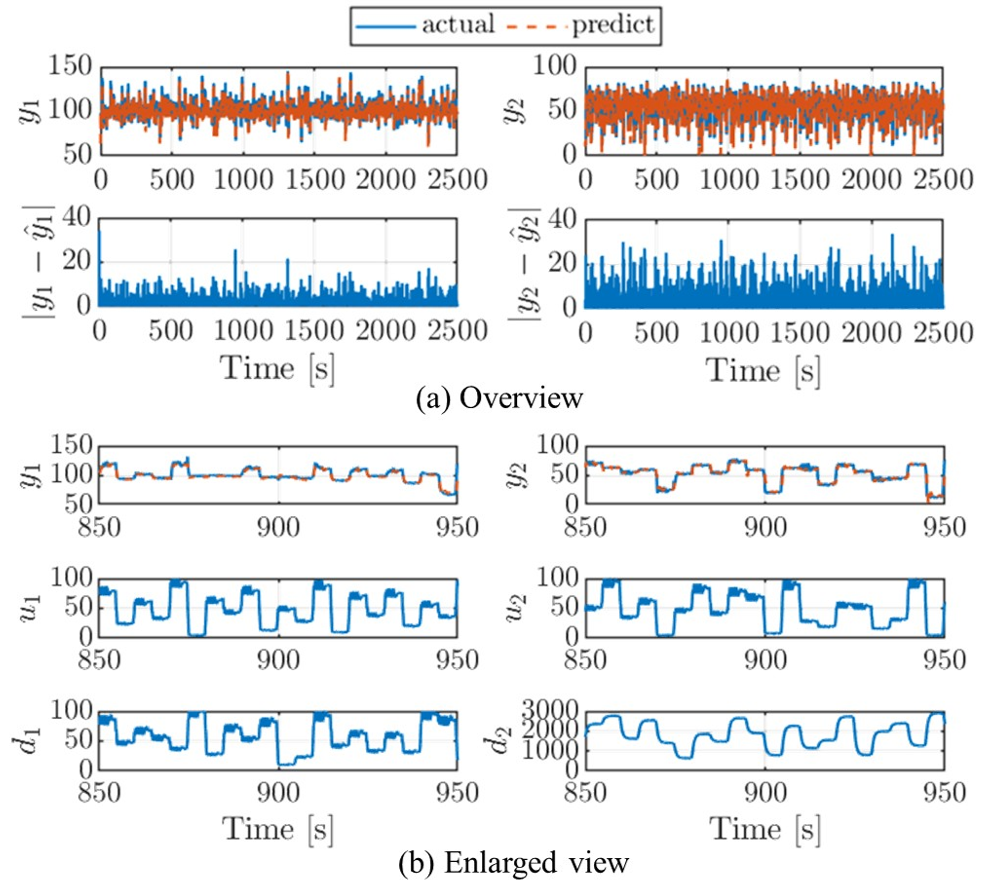}
\vspace{-1em} % 上に近づける
\caption{The long-term predictive performance of $y_1$ and $y_2$ for test data in simulation. 
\label{fig:SimDE_predict_valid_data}}
% \end{center}
\end{figure}

\section{Real-world experimental verification}
\label{sec:Experimental verification}

\subsection{Experimental setting}
\label{subsec:Experimental setting}

A real-world engine bench test, as shown in Fig.~\ref{fig:Eng_bench_test}, was conducted.
The experimental environment is the same as in the literature~\cite{ishizuka_model-free_2017}. 
The variable $y_2$ represents the mass air flow (MAF)~[mg/st], which is related to the EGR ratio, while the other variables are the same as those used in the simulation. In the engine bench tests, training and test data are collected under closed-loop conditions. The sampling period was 100 ms. In contrast, the simulation section used open-loop data, which are generally considered preferable. Here, we demonstrate that the proposed method remains effective in closed-loop experiments, where measurement noise tends to be amplified. The controller employed in the closed-loop experiments was a PID-based mass-production controller.

\begin{figure}[b]%thb
% \begin{center}
\centering
\includegraphics[width=3.5cm]{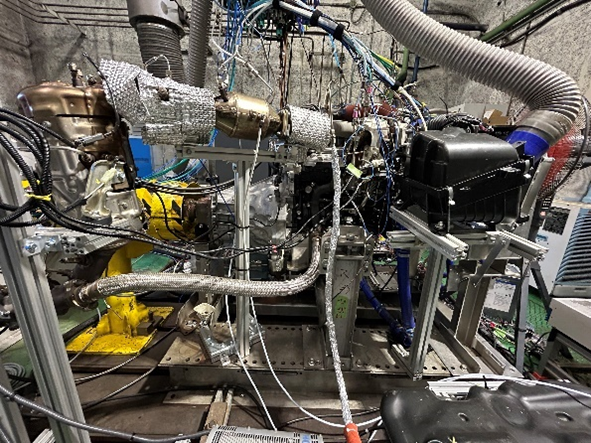}
\vspace{-0.5em} % 上に近づける
\caption{Engine bench test. Fuel type:diesel, Arrangement cylinders: 4 in line, Maximum power:110 kW (150 PS)\/3600 rpm, Maximum torque: 350 Nm (35.69 kgm)\/1800-2600 rpm (net), Combustion type: Direct injection with water-cooled 4-valve DOHC (double overhead camshaft).
\label{fig:Eng_bench_test}}
% \end{center}
\end{figure}

\subsection{Results and discussions}
\label{subsec:Experimental Results}
Fig.~\ref{fig:ExpDE_lrn_data} shows the  the learning and test data.
The Koopman models were derived from the learning data. 
Fig.~\ref{fig:ExpResults_LT_predict_valid} shows the prediction performance of the proposed method on the test data.
These signals are normalized. In addition, we compare the performance of the conventional and proposed methods. Table~\ref{tab:Exp_R2_long_term} reports the R-squared values for one-step predictions on the learning data and for both one-step and long-term predictions on the test data. The table also compares the results of the linear and bilinear Koopman realizations. The cases presented in Table~\ref{tab:Exp_R2_long_term} corresponds to those shown in Table~\ref{tab:simulation_cases}. From the results, the proposed method provides the best performance among them, as well as simulation verification. The LTI Koopman form is limited to the prediction performance. The predictions of the standard bilinear Koopman form, where lifting functions are \(\psi_x=\psi_w\) diverge for the test data. However, the proposed generalized bilinear Koopman form provides the desired performance for learning and test data. From the above, the effectiveness of the proposed methodology was experimentally verified.

\begin{table}[t]
\centering
\caption{R-squared of one-step and long-term predictions in the experimental test.}
\label{tab:Exp_R2_long_term}

\vspace{-1.0em} % 上に近づける
\begin{tabular}{c|c|c|c|c|c|c}
\hline
\multirow{3}{*}{\textbf{Case}} &
\multicolumn{2}{c|}{\makecell{\textbf{One-step}\\\textbf{prediction}}} &
\multicolumn{4}{c}{\textbf{Long-term prediction}} \\ \cline{2-7}
& \multicolumn{2}{c|}{learning data} &
\multicolumn{2}{c|}{learning data} &
\multicolumn{2}{c}{test data} \\ \cline{2-7}
& $y_1$ & $y_2$ & $y_1$ & $y_2$ & $y_1$ & $y_2$ \\ \hline
LK & 0.999 & 0.994 & 0.988 & 0.973 & 0.917 & 0.834 \\ \hline
BKL & 0.999 & 0.997 & 0.997 & 0.992 & $-0.774$ & $-3.68$ \\ \hline
GBKL  & \textbf{0.999} & \textbf{0.997} & \textbf{0.996} & \textbf{0.992} & \textbf{0.936} & \textbf{0.911} \\ \hline
\end{tabular}
\end{table}

\begin{figure}[t]%thb
% \begin{center}
\centering
\includegraphics[width=8.0cm]{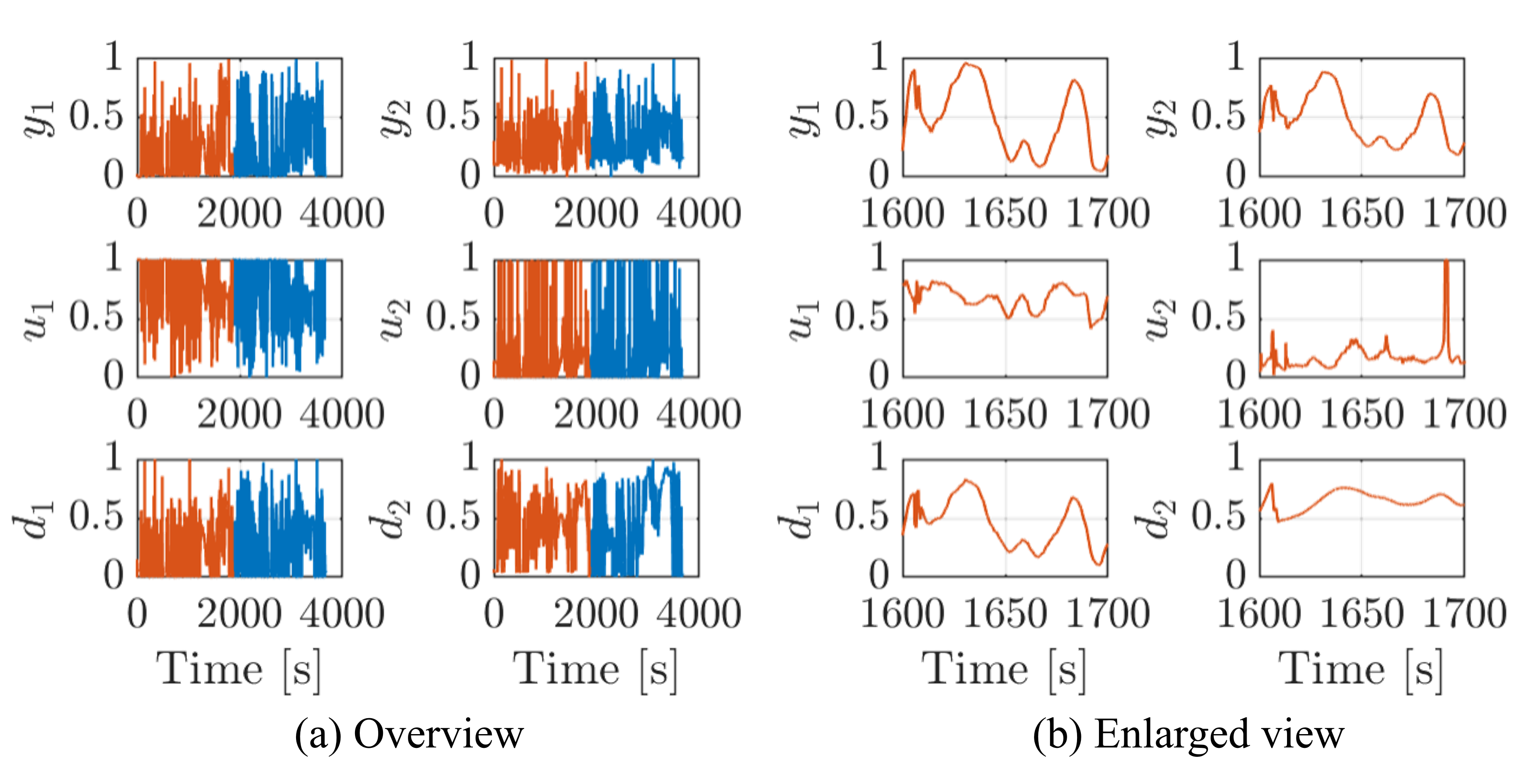}
\vspace{-1em} % 上に近づける
\caption{The learning and test data in experimental test. 
\label{fig:ExpDE_lrn_data}}
\end{figure}

\begin{figure}[t]%thb
    \centering
\includegraphics[width=8.0cm]{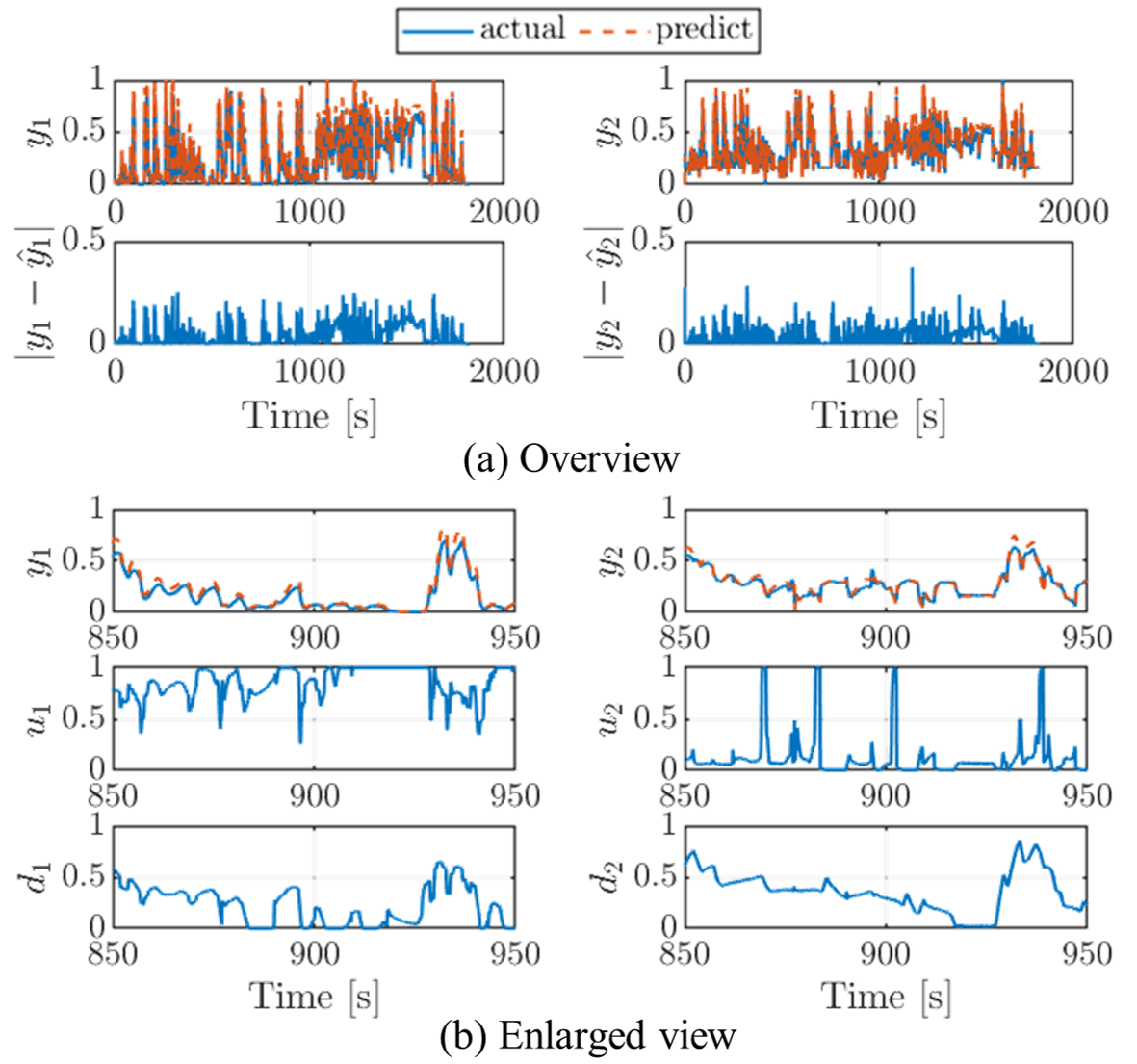}
\vspace{-1em} % 上に近づける
\caption{The long-term predictive performance of $y_1$ and $y_2$ for test data in experimental test. 
\label{fig:ExpResults_LT_predict_valid}}
% \end{center}
\end{figure}

Section~\ref{sec:Experimental verification} has examined the effectiveness of the proposed method through an actual engine bench test. Similar to the simulation verification, the proposed method--namely, the IO bilinear Koopman realization--achieves the best performance among the compared methods, including the LTI Koopman realization and the standard bilinear Koopman realization. In the simulation section, the learning data were acquired through an open-loop test. General system identification typically prefers rich excitation signals under open-loop testing conditions. In the experimental section, we assumed a more realistic scenario and obtained transient data from closed-loop experiments as training data. In other words, we considered that in actual tests, applying random vibrations using open-loop tests would not be easy and would require additional technical expertise. Even under such conditions, we demonstrated that the proposed method can achieve the desired results for both the training and test data. 

\section{Conclusions}
\label{sec:Conclusions}

This study presented an input-output bilinear Koopman realization with a lifting function optimization algorithm. The proposed framework integrates bilinear Koopman realization, input-output data utilization, and lifting function optimization. This approach enables Koopman-based modeling under practical constraints where only input-output data are available, significantly expanding its applicability to industrial systems. The key contributions of this work include the investigation of bilinear form structures for Koopman realization, the systematic selection of lifting function arguments when using input-output data, the optimization of lifting functions via PSO, and the successful application of the proposed method to an industrial diesel engine airpath system. Unlike previous studies that simply combined embedded states with input and output delays--thus including the time evolution of uninformative inputs--our method addresses argument selection to enhance predictive performance. Furthermore, because the design of the lifting function strongly influences accuracy and is difficult for users to determine, the proposed PSO-based optimization--where the centers of the RBFs employed as the lifting function are optimized--provides a practical and effective solution. The method was validated on a diesel engine intake and exhaust system exhibiting strong nonlinearity and MIMO characteristics. Simulation and experimental test results demonstrated substantial improvements in predictive accuracy compared to conventional approaches, achieving an R-squared over 0.90, which meets engineering requirements. Future work will focus on constructing controller with IO Koopman model.

%%%%%%%%%%%%%%%%%%%%%%%%%%%%%%%%%%%%%%%%%%%%%%%%%%%%%%%%%%%%%%%%%%%%%%%%%%%%%%%%

% References

\bibliographystyle{Bibliography/IEEEtran}
% \bibliography{Bibliography/IEEEabrv,Bibliography/BIB_xx-TIE-xxxx,Bibliography/Koopman,Bibliography/SINDy,Bibliography/DE,Bibliography/koopman_DE,Bibliography/Published,Bibliography/DMD}\ %IEEEabrv instead of IEEEfull
\bibliography{Bibliography/IEEEabrv,Bibliography/reference}\ %IEEEabrv 

@article{proctor_dynamic_2016,
	title = {Dynamic {Mode} {Decomposition} with {Control}},
	volume = {15},
	issn = {1536-0040},
	url = {https://epubs.siam.org/doi/10.1137/15M1013857},
	doi = {10.1137/15M1013857},
	language = {en},
	number = {1},
	urldate = {2026-02-10},
	journal = {SIAM Journal on Applied Dynamical Systems},
	author = {Proctor, Joshua L. and Brunton, Steven L. and Kutz, J. Nathan},
	month = jan,
	year = {2016},
	pages = {142--161},
}

@article{zhao_kalman-koopman_2024,
	title = {A {Kalman}-{Koopman} {LQR} {Control} {Approach} to {Robotic} {Systems}},
	volume = {71},
	copyright = {https://ieeexplore.ieee.org/Xplorehelp/downloads/license-information/IEEE.html},
	issn = {0278-0046, 1557-9948},
	url = {https://ieeexplore.ieee.org/document/10505835/},
	doi = {10.1109/TIE.2024.3379674},
	language = {en},
	number = {12},
	urldate = {2026-02-10},
	journal = {IEEE Transactions on Industrial Electronics},
	author = {Zhao, Dongdong and Yang, Xiaodi and Li, Yichang and Xu, Li and She, Jinhua and Yan, Shi},
	month = dec,
	year = {2024},
	pages = {16047--16056},
}

@article{meng_online_2025,
	title = {Online {Koopman} {Operator}-{Based} {Feedforward} {Compensation} {Strategy} for {Fast} {Tool} {Servos} {With} {Robust} {High}-{Bandwidth} {Control}},
	volume = {72},
	copyright = {https://ieeexplore.ieee.org/Xplorehelp/downloads/license-information/IEEE.html},
	issn = {0278-0046, 1557-9948},
	url = {https://ieeexplore.ieee.org/document/10634980/},
	doi = {10.1109/TIE.2024.3433486},
	language = {en},
	number = {3},
	urldate = {2026-02-09},
	journal = {IEEE Transactions on Industrial Electronics},
	author = {Meng, Yixuan and Li, Linlin and Wang, Xiangyuan and Huang, Wei-Wei and Wu, Edmond Q. and Zhu, LiMin},
	month = mar,
	year = {2025},
	pages = {2958--2967},
}

@article{bhattacharya_nonlinear_2022,
	title = {Nonlinear {Model} {Predictive} {Control} of a {Robotic} {Soft} {Esophagus}},
	volume = {69},
	copyright = {https://ieeexplore.ieee.org/Xplorehelp/downloads/license-information/IEEE.html},
	issn = {0278-0046, 1557-9948},
	url = {https://ieeexplore.ieee.org/document/9591327/},
	doi = {10.1109/TIE.2021.3121755},
	language = {en},
	number = {10},
	urldate = {2026-02-09},
	journal = {IEEE Transactions on Industrial Electronics},
	author = {Bhattacharya, Dipankar and Hashem, Ryman and Cheng, Leo K. and Xu, Weiliang},
	month = oct,
	year = {2022},
	pages = {10363--10373},
}

@article{yang_adaptive_2026,
	title = {Adaptive {Deep} {Koopman} {Operators} for {Soft} {Robot} {Control}: {Application} to {Luban} {Lock} {Disassembly}},
	copyright = {https://ieeexplore.ieee.org/Xplorehelp/downloads/license-information/IEEE.html},
	issn = {0278-0046, 1557-9948},
	shorttitle = {Adaptive {Deep} {Koopman} {Operators} for {Soft} {Robot} {Control}},
	url = {https://ieeexplore.ieee.org/document/11349691/},
	doi = {10.1109/TIE.2025.3634429},
	language = {en},
	urldate = {2026-02-09},
	journal = {IEEE Transactions on Industrial Electronics},
	author = {Yang, Yihe and Jiang, Yue and Cao, Wenyu and Li, Cong and Jiang, Wei and Mo, Hangjie and Li, Xiaojian and Xu, Xin and Zhang, Xinglong},
	year = {2026},
	pages = {1--12},
}

@article{huang_nitrogen_2023,
	title = {Nitrogen {Oxides} {Concentration} {Estimation} of {Diesel} {Engines} {Based} on a {Sparse} {Nonstationary} {Trigonometric} {Gaussian} {Process} {Regression} {With} {Maximizing} the {Composite} {Likelihood}},
	volume = {70},
	issn = {0278-0046, 1557-9948},
	url = {https://ieeexplore.ieee.org/document/10002403/},
	doi = {10.1109/TIE.2022.3231246},
	language = {en},
	number = {11},
	urldate = {2026-02-04},
	journal = {IEEE Transactions on Industrial Electronics},
	author = {Huang, Haojie and Peng, Xin and Du, Wei and Ding, Steven X. and Zhong, Weimin},
	month = nov,
	year = {2023},
	pages = {11744--11753},
}

@article{zhao_explicit_2014,
	title = {An {Explicit} {Model} {Predictive} {Control} {Framework} for {Turbocharged} {Diesel} {Engines}},
	volume = {61},
	issn = {0278-0046, 1557-9948},
	url = {http://ieeexplore.ieee.org/document/6584790/},
	doi = {10.1109/TIE.2013.2279353},
	language = {en},
	number = {7},
	urldate = {2026-02-04},
	journal = {IEEE Transactions on Industrial Electronics},
	author = {Zhao, Dezong and Liu, Cunjia and Stobart, Richard and Deng, Jiamei and Winward, Edward and Dong, Guangyu},
	month = jul,
	year = {2014},
	pages = {3540--3552},
}

@book{kaneko_model_2019,
	title = {Model {Based} {Control} for {Automotive} engines},
	isbn = {978-4-339-04661-8},
	language = {jpn},
	publisher = {CORONA},
	author = {Kaneko, Shigehiko and Yamasaki, Yudai and Ohmori, Hiromitsu and Mitsuo, Hirata and Mizumoto, Ikuro and Ichiyanagi, Mitsuhisa and Matsunaga, Akio and Jimbo, Tomohiko},
	year = {2019},
}

@article{haber_structure_1990,
	title = {Structure identification of nonlinear dynamic systems—{A} survey on input/output approaches},
	volume = {26},
	issn = {0005-1098},
	url = {https://www.sciencedirect.com/science/article/pii/000510989090044I},
	doi = {10.1016/0005-1098(90)90044-I},
	number = {4},
	urldate = {2024-04-11},
	journal = {Automatica},
	author = {Haber, R. and Unbehauen, H.},
	month = jul,
	year = {1990},
	pages = {651--677},
}

@article{van_overschee_n4sid_1994,
	series = {Special issue on statistical signal processing and control},
	title = {{N4SID}: {Subspace} algorithms for the identification of combined deterministic-stochastic systems},
	volume = {30},
	issn = {0005-1098},
	shorttitle = {{N4SID}},
	url = {https://www.sciencedirect.com/science/article/pii/0005109894902305},
	doi = {10.1016/0005-1098(94)90230-5},
	number = {1},
	urldate = {2023-12-08},
	journal = {Automatica},
	author = {Van Overschee, Peter and De Moor, Bart},
	month = jan,
	year = {1994},
	pages = {75--93},
}

@article{bevanda_koopman_2021,
	title = {Koopman {Operator} {Dynamical} {Models}: {Learning}, {Analysis} and {Control}},
	volume = {52},
	issn = {13675788},
	shorttitle = {Koopman {Operator} {Dynamical} {Models}},
	url = {http://arxiv.org/abs/2102.02522},
	doi = {10.1016/j.arcontrol.2021.09.002},
	urldate = {2023-07-18},
	journal = {Annual Reviews in Control},
	author = {Bevanda, Petar and Sosnowski, Stefan and Hirche, Sandra},
	year = {2021},
	pages = {197--212},
}

@misc{brunton_sparse_2016,
	title = {Sparse {Identification} of {Nonlinear} {Dynamics} with {Control} ({SINDYc})},
	url = {http://arxiv.org/abs/1605.06682},
	doi = {10.48550/arXiv.1605.06682},
	language = {en},
	urldate = {2025-06-02},
	publisher = {arXiv},
	author = {Brunton, Steven L. and Proctor, Joshua L. and Kutz, J. Nathan},
	month = may,
	year = {2016},
	note = {arXiv:1605.06682 [math]},
}

@article{brunton_modern_2022,
	title = {Modern {Koopman} {Theory} for {Dynamical} {Systems}},
	volume = {64},
	issn = {0036-1445, 1095-7200},
	url = {https://epubs.siam.org/doi/10.1137/21M1401243},
	doi = {10.1137/21M1401243},
	language = {en},
	number = {2},
	urldate = {2025-06-02},
	journal = {SIAM Review},
	author = {Brunton, Steven L. and Budišić, Marko and Kaiser, Eurika and Kutz, J. Nathan},
	month = may,
	year = {2022},
	pages = {229--340},
}

@inproceedings{goswami_global_2017,
	address = {Melbourne, Australia},
	title = {Global bilinearization and controllability of control-affine nonlinear systems: {A} {Koopman} spectral approach},
	isbn = {978-1-5090-2873-3},
	shorttitle = {Global bilinearization and controllability of control-affine nonlinear systems},
	url = {http://ieeexplore.ieee.org/document/8264582/},
	doi = {10.1109/CDC.2017.8264582},
	language = {en},
	urldate = {2023-12-22},
	booktitle = {2017 {IEEE} 56th {Annual} {Conference} on {Decision} and {Control} ({CDC})},
	publisher = {IEEE},
	author = {Goswami, Debdipta and Paley, Derek A.},
	month = dec,
	year = {2017},
	pages = {6107--6112},
}

@article{korda_linear_2018,
	title = {Linear predictors for nonlinear dynamical systems: {Koopman} operator meets model predictive control},
	volume = {93},
	issn = {00051098},
	shorttitle = {Linear predictors for nonlinear dynamical systems},
	url = {https://linkinghub.elsevier.com/retrieve/pii/S000510981830133X},
	doi = {10.1016/j.automatica.2018.03.046},
	language = {en},
	urldate = {2023-07-20},
	journal = {Automatica},
	author = {Korda, Milan and Mezić, Igor},
	month = jul,
	year = {2018},
	pages = {149--160},
}

@article{ishizuka_model-free_2017,
	title = {Model-free adaptive control scheme for {EGR}/{VNT} control of a diesel engine using the simultaneous perturbation stochastic approximation},
	volume = {39},
	issn = {0142-3312, 1477-0369},
	url = {http://journals.sagepub.com/doi/10.1177/0142331215602327},
	doi = {10.1177/0142331215602327},
	language = {en},
	number = {1},
	urldate = {2023-08-07},
	journal = {Transactions of the Institute of Measurement and Control},
	author = {Ishizuka, Shinichi and Kajiwara, Itsuro and Sato, Junichi and Hanamura, Yoshifumi and Hanawa, Satoshi},
	month = jan,
	year = {2017},
	pages = {114--128},
}

@article{lusch_deep_2018,
	title = {Deep learning for universal linear embeddings of nonlinear dynamics},
	volume = {9},
	issn = {2041-1723},
	url = {https://www.nature.com/articles/s41467-018-07210-0},
	doi = {10.1038/s41467-018-07210-0},
	language = {en},
	number = {1},
	urldate = {2024-11-20},
	journal = {Nature Communications},
	author = {Lusch, Bethany and Kutz, J. Nathan and Brunton, Steven L.},
	month = nov,
	year = {2018},
	pages = {4950},
}

@article{moriyasu_diesel_2019,
	title = {Diesel engine air path control based on neural approximation of nonlinear {MPC}},
	volume = {91},
	issn = {09670661},
	url = {https://linkinghub.elsevier.com/retrieve/pii/S0967066119301303},
	doi = {10.1016/j.conengprac.2019.104114},
	language = {en},
	urldate = {2023-08-04},
	journal = {Control Engineering Practice},
	author = {Moriyasu, Ryuta and Nojiri, Sayaka and Matsunaga, Akio and Nakamura, Toshihiro and Jimbo, Tomohiko},
	month = oct,
	year = {2019},
	pages = {104114},
}

@article{wang_improved_2023,
	title = {An {Improved} {Koopman}-{MPC} {Framework} for {Data}-{Driven} {Modeling} and {Control} of {Soft} {Actuators}},
	volume = {8},
	issn = {2377-3766},
	doi = {10.1109/LRA.2022.3229235},
	number = {2},
	journal = {IEEE Robotics and Automation Letters},
	author = {Wang, Jiajin and Xu, Baoguo and Lai, Jianwei and Wang, Yifei and Hu, Cong and Li, Huijun and Song, Aiguo},
	month = feb,
	year = {2023},
	pages = {616--623},
}

@article{williams_extending_2016,
	title = {Extending {Data}-{Driven} {Koopman} {Analysis} to {Actuated} {Systems}},
	volume = {49},
	issn = {24058963},
	url = {https://linkinghub.elsevier.com/retrieve/pii/S2405896316318286},
	doi = {10.1016/j.ifacol.2016.10.248},
	language = {en},
	number = {18},
	urldate = {2023-08-02},
	journal = {IFAC-PapersOnLine},
	author = {Williams, Matthew O. and Hemati, Maziar S. and Dawson, Scott T.M. and Kevrekidis, Ioannis G. and Rowley, Clarence W.},
	year = {2016},
	pages = {704--709},
}

@article{schaible_fuzzy_1997,
	title = {Fuzzy logic models for ranking process effects},
	volume = {5},
	copyright = {https://ieeexplore.ieee.org/Xplorehelp/downloads/license-information/IEEE.html},
	issn = {10636706},
	url = {http://ieeexplore.ieee.org/document/649905/},
	doi = {10.1109/91.649905},
	language = {en},
	number = {4},
	urldate = {2024-04-11},
	journal = {IEEE Transactions on Fuzzy Systems},
	author = {Schaible, B. and {Hong Xie} and {Yung-Cheng Lee}},
	month = nov,
	year = {1997},
	pages = {545--556},
}

@article{o_williams_kernel-based_2015,
	title = {A kernel-based method for data-driven koopman spectral analysis},
	volume = {2},
	issn = {2158-2505},
	url = {http://aimsciences.org//article/doi/10.3934/jcd.2015005},
	doi = {10.3934/jcd.2015005},
	language = {en},
	number = {2},
	urldate = {2024-11-21},
	journal = {Journal of Computational Dynamics},
	author = {O. Williams, Matthew and W. Rowley, Clarence and G.  Kevrekidis, Ioannis},
	year = {2015},
	pages = {247--265},
}

@misc{son_handling_2020,
	title = {Handling plant-model mismatch in {Koopman} {Lyapunov}-based model predictive control via offset-free control framework},
	url = {http://arxiv.org/abs/2010.07239},
	doi = {10.48550/arXiv.2010.07239},
	urldate = {2023-07-31},
	publisher = {arXiv},
	author = {Son, Sang Hwan and Narasingam, Abhinav and Kwon, Joseph Sang-Il},
	month = oct,
	year = {2020},
	note = {arXiv:2010.07239 [cs, eess]},
}

@article{schmid_dynamic_2010,
	title = {Dynamic mode decomposition of numerical and experimental data},
	volume = {656},
	issn = {0022-1120, 1469-7645},
	url = {https://www.cambridge.org/core/product/identifier/S0022112010001217/type/journal_article},
	doi = {10.1017/S0022112010001217},
	language = {en},
	urldate = {2023-07-31},
	journal = {Journal of Fluid Mechanics},
	author = {Schmid, Peter J.},
	month = aug,
	year = {2010},
	pages = {5--28},
}

@article{proctor_generalizing_2018,
	title = {Generalizing {Koopman} {Theory} to {Allow} for {Inputs} and {Control}},
	volume = {17},
	url = {https://epubs.siam.org/doi/10.1137/16M1062296},
	doi = {10.1137/16M1062296},
	number = {1},
	urldate = {2023-08-02},
	journal = {SIAM Journal on Applied Dynamical Systems},
	author = {Proctor, Joshua L. and Brunton, Steven L. and Kutz, J. Nathan},
	month = jan,
	year = {2018},
	publisher = {{Society for Industrial and Applied Mathematics}},
	pages = {909--930},
}

@article{xiao_deep_2023,
	title = {Deep {Neural} {Networks} {With} {Koopman} {Operators} for {Modeling} and {Control} of {Autonomous} {Vehicles}},
	volume = {8},
	issn = {2379-8904},
	doi = {10.1109/TIV.2022.3180337},
	number = {1},
	journal = {IEEE Transactions on Intelligent Vehicles},
	author = {Xiao, Yongqian and Zhang, Xinglong and Xu, Xin and Liu, Xueqing and Liu, Jiahang},
	month = jan,
	year = {2023},
	pages = {135--146},
}

@article{yahagi_sparse_2025,
	title = {Sparse {Identification} and {Nonlinear} {Model} {Predictive} {Control} for {Diesel} {Engine} {Air} {Path} {System}},
	volume = {23},
	issn = {2005-4092},
	url = {https://doi.org/10.1007/s12555-024-0452-9},
	doi = {10.1007/s12555-024-0452-9},
	language = {en},
	number = {2},
	urldate = {2025-04-22},
	journal = {International Journal of Control, Automation and Systems},
	author = {Yahagi, Shuichi and Seto, Hiroki and Yonezawa, Ansei and Kajiwara, Itsuro},
	month = feb,
	year = {2025},
	pages = {620--629},
}

@article{zhang_reduced-order_2023,
	title = {Reduced-order {Koopman} modeling and predictive control of nonlinear processes},
	volume = {179},
	issn = {00981354},
	url = {https://linkinghub.elsevier.com/retrieve/pii/S0098135423003101},
	doi = {10.1016/j.compchemeng.2023.108440},
	language = {en},
	urldate = {2023-12-07},
	journal = {Computers \& Chemical Engineering},
	author = {Zhang, Xuewen and Han, Minghao and Yin, Xunyuan},
	month = nov,
	year = {2023},
	pages = {108440},
}

@article{huang_lstm-mpc_2023,
	title = {{LSTM}-{MPC}: {A} {Deep} {Learning} {Based} {Predictive} {Control} {Method} for {Multimode} {Process} {Control}},
	volume = {70},
	copyright = {https://ieeexplore.ieee.org/Xplorehelp/downloads/license-information/IEEE.html},
	issn = {0278-0046, 1557-9948},
	shorttitle = {{LSTM}-{MPC}},
	url = {https://ieeexplore.ieee.org/document/9994755/},
	doi = {10.1109/tie.2022.3229323},
	language = {en},
	number = {11},
	urldate = {2025-07-09},
	journal = {IEEE Transactions on Industrial Electronics},
	author = {Huang, Keke and Wei, Ke and Li, Fanbiao and Yang, Chunhua and Gui, Weihua},
	month = nov,
	year = {2023},
	Publisher = {Institute of Electrical and Electronics Engineers (IEEE)},
	pages = {11544--11554},
}

@article{zhang_bayesian_2024,
	title = {A {Bayesian} transfer sparse identification method for nonlinear {ARX} systems},
	volume = {38},
	copyright = {http://onlinelibrary.wiley.com/termsAndConditions\#vor},
	issn = {0890-6327, 1099-1115},
	url = {https://onlinelibrary.wiley.com/doi/10.1002/acs.3884},
	doi = {10.1002/acs.3884},
	language = {en},
	number = {10},
	urldate = {2025-07-22},
	journal = {International Journal of Adaptive Control and Signal Processing},
	author = {Zhang, Kang and Luan, Xiaoli and Ding, Feng and Liu, Fei},
	month = oct,
	year = {2024},
	note = {Publisher: Wiley},
	pages = {3484--3502},
}

@misc{abtahi_deep_2025,
	title = {Deep {Bilinear} {Koopman} {Model} for {Real}-{Time} {Vehicle} {Control} in {Frenet} {Frame}},
	url = {http://arxiv.org/abs/2507.12578},
	doi = {10.48550/arXiv.2507.12578},
	language = {en},
	urldate = {2025-10-24},
	publisher = {arXiv},
	author = {Abtahi, Mohammad and Araghi, Farhang Motallebi and Mojahed, Navid and Nazari, Shima},
	month = jul,
	year = {2025},
	note = {arXiv:2507.12578 [eess]},
}

@article{wang_deep_2024,
	title = {Deep bilinear {Koopman} realization for dynamics modeling and predictive control},
	volume = {15},
	issn = {1868-8071, 1868-808X},
	url = {https://link.springer.com/10.1007/s13042-023-02095-y},
	doi = {10.1007/s13042-023-02095-y},
	language = {en},
	number = {8},
	urldate = {2025-10-26},
	journal = {International Journal of Machine Learning and Cybernetics},
	author = {Wang, Meixi and Lou, Xuyang and Cui, Baotong},
	month = aug,
	year = {2024},
	pages = {3327--3352},
}

@article{bruder_advantages_2021,
	title = {Advantages of {Bilinear} {Koopman} {Realizations} for the {Modeling} and {Control} of {Systems} {With} {Unknown} {Dynamics}},
	volume = {6},
	copyright = {https://ieeexplore.ieee.org/Xplorehelp/downloads/license-information/IEEE.html},
	issn = {2377-3766, 2377-3774},
	url = {https://ieeexplore.ieee.org/document/9384174/},
	doi = {10.1109/LRA.2021.3068117},
	language = {en},
	number = {3},
	urldate = {2025-10-26},
	journal = {IEEE Robotics and Automation Letters},
	author = {Bruder, Daniel and Fu, Xun and Vasudevan, Ram},
	month = jul,
	year = {2021},
	pages = {4369--4376},
}

@article{yu_autonomous_2022,
	title = {Autonomous {Driving} using {Linear} {Model} {Predictive} {Control} with a {Koopman} {Operator} based {Bilinear} {Vehicle} {Model}},
	volume = {55},
	issn = {24058963},
	url = {https://linkinghub.elsevier.com/retrieve/pii/S2405896322023254},
	doi = {10.1016/j.ifacol.2022.10.293},
	language = {en},
	number = {24},
	urldate = {2025-10-26},
	journal = {IFAC-PapersOnLine},
	author = {Yu, Siyuan and Shen, Congkai and Ersal, Tulga},
	year = {2022},
	pages = {254--259},
}

@article{williams_datadriven_2015,
	title = {A {Data}–{Driven} {Approximation} of the {Koopman} {Operator}: {Extending} {Dynamic} {Mode} {Decomposition}},
	volume = {25},
	issn = {0938-8974, 1432-1467},
	shorttitle = {A {Data}–{Driven} {Approximation} of the {Koopman} {Operator}},
	url = {http://link.springer.com/10.1007/s00332-015-9258-5},
	doi = {10.1007/s00332-015-9258-5},
	language = {en},
	number = {6},
	urldate = {2025-10-28},
	journal = {Journal of Nonlinear Science},
	author = {Williams, Matthew O. and Kevrekidis, Ioannis G. and Rowley, Clarence W.},
	month = dec,
	year = {2015},
	pages = {1307--1346},
}

@article{shi_deep_2022,
	title = {Deep {Koopman} {Operator} {With} {Control} for {Nonlinear} {Systems}},
	volume = {7},
	copyright = {https://ieeexplore.ieee.org/Xplorehelp/downloads/license-information/IEEE.html},
	issn = {2377-3766, 2377-3774},
	url = {https://ieeexplore.ieee.org/document/9799788/},
	doi = {10.1109/LRA.2022.3184036},
	language = {en},
	number = {3},
	urldate = {2025-11-21},
	journal = {IEEE Robotics and Automation Letters},
	author = {Shi, Haojie and Meng, Max Q.-H.},
	month = jul,
	year = {2022},
	pages = {7700--7707},
}

@article{zhao_deep_2024,
	title = {Deep {Bilinear} {Koopman} {Model} {Predictive} {Control} for {Nonlinear} {Dynamical} {Systems}},
	volume = {71},
	copyright = {https://ieeexplore.ieee.org/Xplorehelp/downloads/license-information/IEEE.html},
	issn = {0278-0046, 1557-9948},
	url = {https://ieeexplore.ieee.org/document/10527398/},
	doi = {10.1109/TIE.2024.3390717},
	language = {en},
	number = {12},
	urldate = {2025-11-21},
	journal = {IEEE Transactions on Industrial Electronics},
	author = {Zhao, Dongdong and Li, Boyu and Lu, Fuxiang and She, Jinhua and Yan, Shi},
	month = dec,
	year = {2024},
	pages = {16077--16086},
}

@article{xiawen_parameters_2021,
	title = {Parameters determination of time-delayed embedding with application to {Koopman} operator-based model predictive frequency control},
	issn = {20960042},
	url = {https://ieeexplore.ieee.org/stamp/stamp.jsp?tp=&arnumber=9535413},
	doi = {10.17775/CSEEJPES.2021.02000},
	language = {en},
	urldate = {2025-11-25},
	journal = {CSEE Journal of Power and Energy Systems},
	author = {Xiawen, Li and Chetan, Mishra and Shichao, Chen and Yajun, Wang and Jaime, De La Ree},
	year = {2021},
}

@article{iwadare_multi-variable_2009,
	title = {Multi-{Variable} {Air}-{Path} {Management} for a {Clean} {Diesel} {Engine} {Using} {Model} {Predictive} {Control}},
	volume = {02},
	issn = {1946-3936, 1946-3944},
	url = {https://saemobilus.sae.org/articles/multi-variable-air-path-management-a-clean-diesel-engine-using-model-predictive-control-2009-01-0733},
	doi = {10.4271/2009-01-0733},
	language = {en},
	number = {1},
	urldate = {2025-12-02},
	journal = {SAE International Journal of Engines},
	author = {Iwadare, Mitsuhiro and Ueno, Masaki and Adachi, Shuichi},
	month = apr,
	year = {2009},
	pages = {764--773},
}

@article{gagliardi_direct_2014,
	title = {Direct {C}/{GMRES} {Control} of {The} {Air} {Path} of a {Diesel} {Engine}},
	volume = {47},
	issn = {14746670},
	url = {https://linkinghub.elsevier.com/retrieve/pii/S1474667016420677},
	doi = {10.3182/20140824-6-ZA-1003.02481},
	language = {en},
	number = {3},
	urldate = {2025-12-02},
	journal = {IFAC Proceedings Volumes},
	author = {Gagliardi, Davide and Othsuka, T. and Re, L. Del},
	year = {2014},
	pages = {3000--3005},
}

@article{wei_gain_2007,
	title = {Gain {Scheduled} \${H}\_\{{\textbackslash}infty\}\$ {Control} for {Air} {Path} {Systems} of {Diesel} {Engines} {Using} {LPV} {Techniques}},
	volume = {15},
	copyright = {https://ieeexplore.ieee.org/Xplorehelp/downloads/license-information/IEEE.html},
	issn = {1063-6536},
	url = {http://ieeexplore.ieee.org/document/4162497/},
	doi = {10.1109/TCST.2007.894633},
	language = {en},
	number = {3},
	urldate = {2025-12-02},
	journal = {IEEE Transactions on Control Systems Technology},
	author = {Wei, Xiukun and Del Re, Luigi},
	month = may,
	year = {2007},
	pages = {406--415},
}

@article{shi_koopman_2023,
	title = {Koopman {Operators} for {Modeling} and {Control} of {Soft} {Robotics}},
	volume = {4},
	issn = {2662-4087},
	url = {https://link.springer.com/10.1007/s43154-023-00099-8},
	doi = {10.1007/s43154-023-00099-8},
	language = {en},
	number = {2},
	urldate = {2025-12-01},
	journal = {Current Robotics Reports},
	author = {Shi, Lu and Liu, Zhichao and Karydis, Konstantinos},
	month = jul,
	year = {2023},
	pages = {23--31},
}

@article{zawacki_dynamic_2023,
	title = {Dynamic {Mode} {Decomposition} for {Control} {Systems} with {Input} {Delays}},
	volume = {56},
	issn = {24058963},
	url = {https://linkinghub.elsevier.com/retrieve/pii/S2405896323023406},
	doi = {10.1016/j.ifacol.2023.12.007},
	language = {en},
	number = {3},
	urldate = {2025-12-10},
	journal = {IFAC-PapersOnLine},
	author = {Zawacki, Christopher C. and Abed, Eyad H.},
	year = {2023},
	pages = {97--102},
}

@article{yonezawa_sparse_2026,
	title = {Sparse {Identification} of {Nonlinear} {Dynamics} {With} {Library} {Optimization} {Mechanism}: {Recursive} {Long}-{Term} {Prediction} {Perspective}},
	language = {en},
	journal = {IEEE Transactions on Cybernetics},
	author = {Yonezawa, Ansei and Yonezawa, Heisei and Yahagi, Shuichi and Kajiwara, Itsuro and Kijimoto, Shinya and Taniuchi, Hikaru and Murakami, Kentaro},
	year = {2026},
}

\end{document}